\newcommand*\circled[1]{%
  \tikz[baseline=(C.base)]\node[draw,circle,inner sep=0.5pt](C) {#1};\!
}
\newtheorem{thm}{Theorem}
\newtheorem{lem}{Lemma}
\newtheorem{rem}{Remark}
\newtheorem{defn}{Definition}
\newtheorem{exam}{Example}
\newcommand{\eps}{\epsilon}
\newcommand{\bigoh}{O}
\DeclareMathOperator{\cof}{cof}
\DeclareMathOperator{\coeff}{coeff}
\title{Extensive-Form Perfect Equilibrium Computation in Two-Player Games}
\author{Gabriele Farina\\
  Computer Science Department\\
  Carnegie Mellon University\\
  5000 Forbes Avenue\\
  Pittsburgh, PA 15213, USA\\
  \href{mailto:gfarina@cs.cmu.edu}{\texttt{gfarina@cs.cmu.edu}}
\And Nicola Gatti\\
Dipartimento di Elettronica, Informazione e Bioingegneria\\Politecnico di Milano\\
Piazza Leonardo da Vinci, 32\\
I-20133, Milan, Italy\\
  \href{mailto:nicola.gatti@polimi.it}{\texttt{nicola.gatti@polimi.it}}}
\newif\ifshowsupplementalmaterial
\begin{document}
	\maketitle
\begin{abstract}
\begin{quote}
We study the problem of computing an Extensive-Form Perfect Equilibrium (EFPE) in 2-player games. This equilibrium concept refines the Nash equilibrium requiring resilience w.r.t. a specific vanishing perturbation (representing  mistakes of the players at each decision node). The scientific challenge is intrinsic to the EFPE definition: it requires a perturbation over the agent form, but the agent form is computationally inefficient, due to the presence of highly nonlinear constraints. We show that the sequence form can be exploited in a non-trivial way and that, for general-sum games, finding an EFPE is equivalent to solving a suitably perturbed linear complementarity problem. We prove that Lemke's algorithm can be applied, showing that computing an EFPE is \textsf{PPAD}-complete. In the notable case of zero-sum games, the problem is in \textsf{FP} and can be solved by linear programming. 
Our algorithms also allow one to find a Nash equilibrium when players cannot perfectly control their moves, being subject to a given execution uncertainty, as is the case in most realistic physical settings.
\end{quote}
\end{abstract}

\section{Introduction}

Computing solutions of games is currently one of the hottest problems in computer science, as providing optimal strategies to \emph{autonomous agents} interacting strategically is central in Artificial Intelligence~\cite{Shoham:2008:MSA:1483085}. Finding a Nash Equilibrium (NE)---the basic solution concept for non-cooperative games---is \textsf{PPAD}-complete even in 2-player games~\cite{Chen:2009:SCC:1516512.1516516} and it is unlikely that there is a polynomial-time algorithm, since it is commonly believed that $\mathsf{FP}\subset\mathsf{PPAD}\subset\mathsf{FNP}$. We recall a search problem is in the \textsf{PPAD} class if there is a \emph{path-following} algorithm whose iterations have a polynomial-time cost. In the case of 2-player normal-form games, this algorithm is provided by~\cite{lemkehowson}. 

Extensive-form games provide a richer representation of strategic interaction situations w.r.t. the normal form. The study of extensive-form games is much more involved than that of normal-form games. A variation of Lemke-Howson's algorithm, called \emph{Lemke's algorithm}, finds an NE in a 2-player extensive-form game showing that the problem is in the \textsf{PPAD} class~\cite{koller1996efficient}. However, the concept of NE is not satisfactory in extensive-form games, and NE refinements are studied~\cite{selten1975efpe}. When information is perfect, the concept of Subgame Perfect Equilibrium (SPE) is satisfactory, while it is not when information is imperfect. In this latter case, refinements  are usually based on the idea of \emph{perturbations} representing mistakes of the players. In a Quasi-Perfect Equilibrium (QPE)---proposed by van Damme--- a player maximizes their utility in each decision node taking into account \emph{only} the future mistakes of the opponents, whereas, in an Extensive-Form Perfect Equilibrium (EFPE)---proposed by Nobel prized Selten---, players maximize their utility in each decision node keeping into account the future mistakes of \emph{both} themselves and their opponents~\cite{Hillas20021597}. The sets of QPEs and EFPEs may be disjoint, requiring different techniques. Given a specific perturbation, computing a QPE is \textsf{PPAD}-complete~\cite{quasiperfect} and can be done by summing the perturbation to the constant terms in the linear constraints of the sequence form~\cite{VONSTENGEL1996220}; due to this reason, we say that this pertubation is \emph{additive}. However, the problem of efficiently computing an EFPE is still open. The scientific challenge is intrinsic to the EFPE definition: it is based on a perturbation over the agent form, but the agent form is computationally inefficient, presenting highly non-linear equilibrium constraints. The only previous attempt is~\cite{DBLP:conf/aaai/GattiI11}, but no proof is provided about neither the soundness  nor polynomial-time cost of each algorithm iteration (details are in the Supplemental Material).

	We show that finding an EFPE is \textsf{PPAD}-complete in 2-player general-sum games and can be done by means of Lemke's algorithm with an extra polynomial computation cost due to a numeric perturbation, and that it is in \textsf{FP} in 2-player zero-sum games and can be done by linear programming with the same perturbation for the general-sum case. The table below summarizes the results known so far. `($*$)` denotes original contribution discussed in this paper.

				\begin{table}[H]
\resizebox{\linewidth}{!}{
		  \begin{tabular}{|l|l|l|} \hline
		  	Solution concept                       &  General-sum &  Zero-sum\\\hline \hline
		  	Nash (NE)             &  \textsf{PPAD}-complete        &  \textsf{FP}\\\hline
		  	Subgame Perfect (SPE) &  \textsf{PPAD}-complete        &  \textsf{FP}\\\hline
		  	Quasi Perfect (QPE)   &  \textsf{PPAD}-complete        &  \textsf{FP}\\\hline
		  	Extensive-Form Perfect (EFPE) &  \textsf{PPAD}-complete ($*$) &  \textsf{FP} ($*$)\\\hline
		  \end{tabular}
}
		\end{table}
	
In order to prove our main result, we provide  also two  original results of broader interest. First, we show that a perturbation over the agent form can be formulated as a specific symbolic perturbation over the coefficients of the variables of the sequence form (due to this reason, we say that this perturbation is \emph{multiplicative}). This shows that computing an equilibrium when a player does not have perfect control over the execution of their moves along the game tree, as is customary for physical agents (e.g., robots) whose actions are subject to execution uncertainty, is \textsf{PPAD}-complete or in \textsf{FP} in general-sum and zero-sum games, respectively. Second, we show that we can turn the symbolically perturbed problem above into a numerically perturbed problem. We believe our approach to be particularly interesting, in that it not only applies to the computation of EFPEs, but rather is a more general framework, that can be used to derive, e.g., the results on QPEs in a more natural fashion. All omitted proofs can be found in the Supplemental Material.

\section{Preliminaries}
	In the following, we adopt the notation introduced by~\cite{Shoham:2008:MSA:1483085}. We invite the reader unfamiliar with the topic to refer to~\cite{Shoham:2008:MSA:1483085} or any other classic textbook on the subject for further information and context.
	
	An \emph{extensive-form game} $\Gamma$ is defined over a game tree. In each non-terminal node a single player moves and each edge corresponds to an action available to the player. As customary, $N$ denotes the set of players, $A_i$ denotes the set of actions available to player~$i$ and $a$ is an action, $\mathbf{a}$ denotes the action profile of all the players  and $\mathbf{a}_{-i}$ denotes the action profile of the opponents of player~$i$. Furthermore, $H_i$ denotes the set of information sets of player~$i$ and $h$ is an information set. Finally, $\iota(h)$ is the player that moves at~$h$, $\rho(h)$ is the set of actions available at~$h$ to player~$\iota(h)$, and function $u_i$ returns the utility of player~$i$ from each terminal node.

	The agent form~\cite{selten1975efpe} of an extensive-form game is a tabular representation in which, for every player~$i$ and information set~$h \in H_i$, there is a fictitious player called \emph{agent} and all the agents of player~$i$ have the same utility from the terminal nodes. Player~$i$'s strategy over action~$a$, called \emph{behavioral}, is denoted by $\pi_i(a)\geq 0$ and is such that for each $h$ it holds $\sum_{a \in \rho(h)}\pi_{\iota(h)}(a)=1$. The strategy of the agent playing at $h$ is the restriction of $\pi_{\iota(a)}$ to actions $\rho(h)$. A behavioral strategy profile is denoted by~$\pi$.

	The concept of Extensive-Form Perfect Equilibrium~\cite{selten1975efpe}, also known as ``Trembling hand perfect equilibrium'', is defined on the agent form. We initially introduce the definitions of \emph{perturbed game} (over the agent form) and Nash equilibrium of the agent form since they are necessary to introduce the definition of EFPE.
	\begin{defn}\label{def:perturbedgame}
		Let $\Gamma$ be an extensive-form game and $l(a)>0$ be a positive number called \emph{perturbation} such that $\sum_{a \in \rho(h)}l(a)<1$ for every $h$, then a (agent-form) perturbed game $(\Gamma, l)$ is an extensive-form game with the constraint that  $\pi_{\iota(h)}(a)\ge l(a)$ for every $h$ and $a \in \rho(h)$.
	\end{defn}
	\begin{defn}\label{def:efne}
		A behavioral strategy profile $\pi$ is a Nash equilibrium of the agent form of $\Gamma$ if, for every information set~$h$, the behavioral strategy of the agent playing at~$h$ is best response to the strategies of all the other agents.
	\end{defn}
	The problem of finding a Nash equilibrium of the agent form can be formulated as a non-linear complementarity problem (NLCP). This formulation is not useful in practice since the high non-linearity raises a number of computational issues. We now introduce the definition of EFPE.
	\begin{defn}\label{def:efpe}
		A strategy profile $\pi$ is an EFPE of $\Gamma$ if it is a limit point of a sequence $\{\pi(l)\}_{l \downarrow 0}$ where $\pi(l)$ is a Nash equilibrium of the agent form of the perturbed game $(\Gamma, l)$.
	\end{defn}
	
	Finally, we introduce the sequence form~\cite{VONSTENGEL1996220} that provides a computationally efficient representation of an extensive-form game. The set of players of the sequence form is the same of that of the extensive form and each player~$i$ plays \emph{sequences} $q\in Q_i$ of actions $a \in A_i$ over the game tree. There is a special sequence, denoted by~$q_\emptyset$ and available to all the players, and all the other sequences $q \in Q_i$ are defined by induction extending some sequence~$q'\in Q_i$, starting from $q_\emptyset$, with an action~$a\in A_i$. As customary, $qa \in Q_i$ denotes the sequence obtained by extending sequence~$q\in Q_i$ with  action~$a\in A_i$. A sequence is called \emph{terminal} if, combined with some sequence of the other players, leads to a terminal node, and \emph{non-terminal} otherwise. With 2 players, $U_i$ is the utility matrix of player~$i$ and $U_i(q_i,q_{-i})$ returns, when sequence profile $(q_i,q_{-i})$ leads to a terminal node (here $q_i \in Q_i$ and $q_{-i} \in Q_{-i}$), the utility of the node, and zero otherwise. The strategy of player~$i$ over sequence $q$ is denoted by $r_i(q)\ge 0$ and is called \emph{realization plan}. Finally, strategies are subject to special constraints: $r_i(q_\emptyset)=1$ and, for every information set~$h$ and sequence~$q$ leading to~$h$, $r_i(q) = \sum_{a \in \rho(h)}r_i(qa)$. For notational convenience, these constraints can be written as: $F_i\, r_i = f_i$, where $F_i$ is an opportune matrix and $f_i$ is a vector of zeros except for the first position whose value is one. Finally, we recall that a strategy profile $r$ is \emph{realization equivalent} to a strategy profile $\pi$ when $r$ and $\pi$ induce the same probability distribution on the terminal nodes. Given profile~$r$, a realization-equivalent~$\pi$ can be derived as $\pi_i(a) = r_i(qa)/r_i(q)$ if $r_i(q)>0$ and $\pi_i(a)$ is any otherwise.

	The definition of a Nash equilibrium of the sequence form is standard, requiring each player to play their best response. In contrast to what happens in the agent form, the problem of finding a Nash equilibrium in the sequence form can be formulated as a linear complementarity problem (LCP)~\cite{koller1996efficient} and can be solved by means of Lemke's algorithm~\cite{lemke1970recent}. In particular, \cite{koller1996efficient} show that by applying an opportune affine transformation of the players' utility matrices the LCP satisfies two properties that allow  Lemke's algorithm to terminate always with a Nash equilibrium (we report these two properties, that we use in the following, in the Supplemental Material). This result, combined with the fact the computational cost of each pivoting step of Lemke's algorithm is polynomial, shows that the problem of finding a Nash equilibrium of the sequence form is in the \textsf{PPAD} class. Importantly, it is known that, without any perturbation, any Nash equilibrium of the sequence form is also a Nash equilibrium of the agent form, while, in presence of perturbations, this result may hold or not, depending on the definition of the specific perturbation used.

\section{Extensive-Form Perfect Equilibria and LCPs}
We initially show that introducing a specific perturbation over the realization-plan strategies is equivalent to introducing a perturbation over the behavioral strategies. For the sake of simplicity, we study the case in which $l(a)=\eps$ for every $a$, thus leading to a specific EFPE. All the results discussed in this section and in the following ones can be extended to the general case in which $l(a)$ is a polynomial in $\eps$ potentially different for each action~$a$
---we recall that considering only polynomial functions of $\eps$ is sufficient to find any EFPE, as discussed in~\cite{blum1991,govindam2003}.
	\begin{thm}\label{thm:efpesequence}
		A realization-plan strategy profile $r$ is an EFPE of $\Gamma$ if it is a limit point of a sequence $\{r(\eps)\}_{\eps \downarrow 0}$, where $r(\eps)$ is a Nash equilibrium of the sequence form of $\Gamma$ under the constraint $r_i(qa) \ge \eps r_i(q)$ for every player~$i$, sequence~$q$, and action~$a$.
	\end{thm}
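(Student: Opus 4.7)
The plan is to translate between the two perturbation schemes—the additive constraint on behavioral strategies in Definition~\ref{def:perturbedgame} and the multiplicative constraint on realization plans in the theorem statement—and verify that Nash equilibria correspond under this translation. The first step is to observe that, whenever $r_i(q) > 0$, the realization-equivalent behavioral strategy satisfies $\pi_i(a) = r_i(qa)/r_i(q)$, so the constraint $\pi_{\iota(h)}(a) \geq \eps$ is literally equivalent to $r_i(qa) \geq \eps\, r_i(q)$. The multiplicative constraint in fact forces $r_i(q) > 0$ for every sequence: starting from $r_i(q_\emptyset)=1$ and applying $r_i(qa) \geq \eps\, r_i(q)$ inductively on sequence length, every sequence carries strictly positive realization weight, so the ratio defining $\pi_i(a)$ is always well-defined and the two feasible sets are in bijection (for $\eps$ small enough that the perturbation is admissible in the sense of Definition~\ref{def:perturbedgame}).

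The second step is to transfer the Nash-equilibrium property across this bijection. Given a feasible $r(\eps)$, define $\pi(\eps)$ by $\pi_i(\eps)(a) = r_i(\eps)(qa)/r_i(\eps)(q)$; the goal is to show that $r(\eps)$ is a sequence-form Nash equilibrium under the multiplicative constraint if and only if $\pi(\eps)$ is a Nash equilibrium of the agent form of the perturbed game $(\Gamma,\eps)$. This is essentially the classical realization-equivalence theorem lifted to the perturbed setting: because every sequence has strictly positive weight, every information set is reached with positive probability conditional on the player's prior choices, and therefore the global best-response over $r_i$ subject to $F_i r_i = f_i$ together with the per-action lower bounds decomposes exactly into the per-information-set best responses that define agent-form equilibrium.

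Once the bijection is in place, the theorem follows by a compactness argument. If $r$ is a limit point of $\{r(\eps)\}_{\eps \downarrow 0}$ along some subsequence $\eps_k \downarrow 0$, the induced behavioral strategies $\pi(\eps_k)$ lie in the compact cube $\prod_{i,a} [0,1]$ and admit a further subsequence converging to some $\pi$. Passing to the limit in $\pi_i(\eps_k)(a)\,r_i(\eps_k)(q) = r_i(\eps_k)(qa)$ shows that $\pi$ is realization-equivalent to $r$, and each $\pi(\eps_k)$ is by the previous step a Nash equilibrium of the agent form of $(\Gamma, \eps_k)$, so $\pi$ meets Definition~\ref{def:efpe} and $r$ is realization-equivalent to an EFPE—which is the content of the statement.

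The main obstacle is formalizing the ``realization-equivalence under perturbation'' step. In the unperturbed sequence form, the standard proof leverages the freedom to shuffle probability mass across unreached subtrees, a freedom that the multiplicative constraint curtails. On the other hand, the same constraint ensures that every information set is actually on-path, which removes the need to reason about unreached continuations and should make the decomposition cleaner, not harder. A careful KKT argument—checking that the binding sequence-form inequalities $r_i(qa) = \eps\, r_i(q)$ correspond exactly to the binding agent-form inequalities $\pi_i(a) = \eps$, and that the Lagrange multipliers for the flow constraints $F_i r_i = f_i$ supply precisely the continuation utilities attached to each information set—should close this gap and yield the theorem.
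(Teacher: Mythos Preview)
Your two-step skeleton---first identify the perturbed feasible sets, then transfer the equilibrium property---is exactly the paper's structure, and your first step (inductively forcing $r_i(q)\ge\eps^{|q|}>0$ so that $\pi_i(a)=r_i(qa)/r_i(q)$ is well-defined and the constraint $r_i(qa)\ge\eps\,r_i(q)$ becomes $\pi_i(a)\ge\eps$) is literally the paper's first step.

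Where you diverge is in the second step. You flag the equilibrium transfer as the main obstacle and propose a KKT argument matching Lagrange multipliers of the flow constraints to continuation values. The paper avoids this machinery entirely: it writes down the agent-form expected utility $EU^{\text{AF}}_{a_h}(\pi_{-h})$ of an action $a_h$ at information set $h$ and the sequence-form marginal utility $EU^{\text{SF}}_{a_h}(r_{-i})$ of the corresponding sequence $qa_h$, and observes directly that
\[
  EU^{\text{AF}}_{a_h}(\pi_{-h}) \;=\; \alpha\, EU^{\text{SF}}_{a_h}(r_{-i}) + \beta,
\]
where $\alpha = r_i(q) > 0$ and $\beta$ collects the contribution of all terminal sequences not passing through $h$, hence does not depend on $a_h$. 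Since the transformation is affine with strictly positive slope, the $\arg\max$ over $a_h\in\rho(h)$ is preserved, and the best-response conditions coincide at every information set. This is shorter and more transparent than a KKT derivation, and it uses exactly the positivity $r_i(q)>0$ that you already secured in step one; you may want to replace your proposed Lagrangian argument with this direct comparison.

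Your compactness paragraph---extracting a subsequence of the induced $\pi(\eps_k)$ and passing to the limit in $\pi_i(\eps_k)(a)\,r_i(\eps_k)(q)=r_i(\eps_k)(qa)$ to check realization equivalence of the limit---is a detail the paper leaves implicit, so keeping it makes your write-up cleaner than the original on that point.
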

	\begin{proof}
		The proof is structured into two steps. In the first step, we show that requiring $r_i(qa) \ge \eps r_i(q)$ for every player~$i$, sequence~$q$, and action~$a$ in sequence form is equivalent to considering the perturbed game $(\Gamma, l)$ where $l(a)=\eps$ for every action~$a$. 
In the second step, we show that any Nash equilibrium in the sequence form of such a perturbed game is a Nash equilibrium in the agent form.

		Focus on the first step. The empty sequence $q_\varnothing$ is played with probability one and then, by induction, every sequence $q$ is played with a strictly positive probability of at least $\eps^{|q|}$ where $|q|$ is the length in terms of actions of sequence $q$. Since the behavioral strategy $\pi_i(a)$ is defined as $\pi_i(a) = r_i(qa)/r_i(q)$, we have that requiring $r_i(qa) \ge \eps r_i(q)$ is equivalent to require $\pi_i(a) \ge \eps$ for every $a$. This completes the proof of the first step.

		Focus on the second step. The proof follows from the definition of sequence form. Nevertheless, we report all the details. The expected utility (in the agent-form representation) $EU^{\text{AF}}_{a_h}$ provided by action $a_h \in \rho(h)$ to player $\iota(h)=i$ given $\pi_{-h}$ is:
		\begin{multline*}
		EU^{\text{AF}}_{a_h}(\pi_{-h}) = \\ \sum\limits_{\mathbf{a}_{-h} \in A_{-h}} U^{\text{AF}}_i(a_h, \mathbf{a}_{-h}) \prod\limits_{h' \in H\setminus\{h\}}\pi_{\iota(h')}((\mathbf{a}_{-h})_{h'}).
		\end{multline*}
		We denote by $U^{\text{AF}}_i$ the utility function of player~$i$ in the agent-form representation and by $\mathbf{a}_{-h}$ the action profile in which only the action played at~$h$ is excluded.
		The expected utility (in the sequence-form representation) $EU^{\text{SF}}_{a_h}$ provided by sequence $qa_h \in Q_i$ where $a_h \in \rho(h)$ to player $i$ given $r_{-i}$ is:
		\[
		EU^{\text{SF}}_{a_h}(r_{-i}) = \sum\limits_{q': qa_h \in q'}\sum\limits_{q'' \in Q_{-i}} U_i(q',q'')r_{-i}(q').
		\]
		In the agent form, for each information set~$h$, a Nash equilibrium assures that action~$a_h$ is played with $\sigma_{\iota(h)}(a_h)> \eps$
		only if $EU^{\text{AF}}_{a_h}(\pi_{-h})$ is the maximum among the $EU^{\text{AF}}_{a'_h}(\pi_{-h})$s for all $a'_h\in \rho(h)$. In the sequence form, for each information set~$h$, a Nash equilibrium assures that sequence~$qa_h$ is played with $r_{\iota(h)}(qa_h)> \eps r_{\iota(h)}(a_h)$
		only if $EU^{\text{SF}}_{qa_h}(r_{-i})$ is the maximum among the $EU^{\text{SF}}_{qa'_h}(r_{-i})$s for all $a'_h\in \rho(h)$. We show that the two families of constraints are the same except for an affine transformation not depending on the actions available at the information set~$h$ and preserving the maximum. At every information set~$h$, it holds $EU^{\text{AF}}_{a_h}(\pi_{-h}) = \alpha EU^{\text{SF}}_{a_h}(r_{-i}) + \beta$ for every $a_h \in \rho(h)$, where $\alpha_h$ and $\beta_h$ do not depend on the actions available at~$h$. More precisely, $\alpha = \prod_{a' \in q}\pi_i(a') = r_i(q)> \eps^{|q|}>0$ and $\beta = \sum\limits_{q': \not \exists a' \in \rho(h), qa' \in q'}\sum\limits_{q'' \in Q_{-i}} U_i(q',q'')r_{-i}(q')$. Therefore, the action $a_h$ that maximizes $EU^{\text{AF}}_{a_h}(\pi_{-h})$ maximizes also $EU^{\text{SF}}_{a_h}(r_{-i})$ when $\pi$ and $r$ are realization equivalent.
	\end{proof}

	The proof of the theorem above shows that requiring the condition $r_i(qa) \ge \eps\, r_i(q)$ for every $i \in N,q \in Q_i,a \in A_i$ is equivalent to considering the perturbed game $(\Gamma, l)$ where $l(a) =\eps$ for every action~$a$. For notational convenience, such a condition can be expressed as:
	\[
		R_i(\eps)\, r_i = \tilde{r}_i \geq 0,
	\]
	where $R_i(\eps)$ is a matrix that we call \emph{behavioral perturbation matrix} and $\tilde{r}_i$ is the residual strategy (i.e., the strategy of the player once perturbation has been excluded).

	\begin{figure}
		\centering\begin{tikzpicture}[scale=0.85]
			\node[circle, inner sep=.5mm, fill=black] (p11) at (0, 0) {};
			\node[circle, inner sep=.5mm, fill=black] (l1) at (-1, -1) {};
			\node[circle, inner sep=.5mm, fill=black] (p12) at (1, -1) {};
			\node[circle, inner sep=.5mm, fill=black] (l2) at (0, -2) {};
			\node[circle, inner sep=.5mm, fill=black] (p21) at (2, -2) {};
			\node[circle, inner sep=.5mm, fill=black] (l3) at (1, -3) {};
			\node[circle, inner sep=.5mm, fill=black] (l4) at (3, -3) {};
			\draw (p11) node[above]{\footnotesize $1.1$} --node[above left] {\footnotesize $\mathsf{L}_1$} (l1) node[below] {\footnotesize $(1,1)$};
			\draw (p11) --node[above right] {\footnotesize $\mathsf{R}_1$} (p12) node[above right] {\footnotesize $1.2$};
			\draw (p12) --node[above left] {\footnotesize $\mathsf{L}_2$} (l2) node[below] {\footnotesize $(1,1)$};
			\draw (p12) --node[above right] {\footnotesize $\mathsf{R}_2$} (p21) node[above right] {\footnotesize $2.1$};
			\draw (p21) --node[above left] {\footnotesize $\mathsf{l}_1$} (l3) node[below] {\footnotesize $(1,1)$};
			\draw (p21) --node[above right] {\footnotesize $\mathsf{r}_1$} (l4) node[below] {\footnotesize $(0,0)$};
		\end{tikzpicture}

		\caption{A sample game.}

				\label{fig:sample game}
	\end{figure}
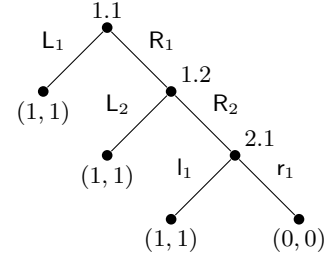

	\begin{exam}\label{exmp:behavioral perturbation matrices}
		Consider the sample game of Figure~\ref{fig:sample game}. Matrices $R_1(\eps)$ and $R_2(\eps)$ are as follows:
		\[
			R_1(\eps) = \begin{pmatrix}
				1 & 0 & 0 & 0 & 0\\
				-\eps & 1 & 0 & 0 & 0\\
				-\eps & 0 & 1 & 0 & 0\\
				0 & 0 & -\eps & 1 & 0\\
				0 & 0 & -\eps & 0 & 1
			\end{pmatrix}\!,\,
			R_2(\eps) = \begin{pmatrix}
				1 & 0 & 0\\
				-\eps & 1 & 0\\
				-\eps & 0 & 1
			\end{pmatrix}.
		\]
	\end{exam}
	We study the properties of matrix $R(\eps)$ and of its inverse.
	\begin{rem}
		Behavioral perturbation matrices are lower triangular square matrices having only $0$ or $-\eps$ as entries.
	\end{rem}
	\begin{restatable}{lem}{inverseR}\label{lem:inverse R}
		Let $R(\eps)$ be a $n\times n$ behavioral perturbation matrix. Then $R(\eps)$ is invertible, and its inverse is
		\[
			R(\eps)^{-1} = I + \eps E(\eps),
		\]
		where $I$ is the identity matrix, and $E(\eps)$ is a lower triangular matrix whose entries are polynomials in $\eps$ having non-negative integer coefficients.
	\end{restatable}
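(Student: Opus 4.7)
The plan is to exploit the structure of $R(\eps)$ very directly: write it as $R(\eps) = I - \eps N$, where $N$ is the matrix obtained by dividing every off-diagonal entry of $R(\eps)$ by $-\eps$. By the remark preceding the lemma, $N$ is lower triangular with zeros on the diagonal and entries in $\{0,1\}$ elsewhere; in particular, $N$ is \emph{strictly} lower triangular.

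Next I would invoke the standard fact that any strictly lower triangular $n\times n$ matrix is nilpotent of index at most $n$, i.e.\ $N^n = 0$. This makes the Neumann series for $(I - \eps N)^{-1}$ a \emph{finite} sum, so there is no convergence issue and we obtain the identity
\[
    R(\eps)^{-1} \;=\; \sum_{k=0}^{n-1} \eps^k N^k \;=\; I \;+\; \eps\Bigl(N + \eps N^2 + \cdots + \eps^{n-2} N^{n-1}\Bigr).
\]
Verifying this formula is a one-line calculation: multiplying it by $I - \eps N$ yields a telescoping sum that collapses to $I - \eps^n N^n = I$, which confirms both invertibility and the explicit form.

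It then remains to read off the properties of $E(\eps) := N + \eps N^2 + \cdots + \eps^{n-2}N^{n-1}$. Since the set of lower triangular matrices is closed under sums and products, each $N^k$ is lower triangular, hence so is $E(\eps)$. Since $N$ has non-negative integer entries, each product $N^k$ has non-negative integer entries as well, and so the entries of $E(\eps)$ are polynomials in $\eps$ with non-negative integer coefficients, as required.

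There is no real obstacle here: the key conceptual step is simply recognizing that subtracting $\eps$ times a strictly lower triangular matrix from the identity gives a nilpotent perturbation and therefore a closed-form polynomial inverse. The remaining checks (lower triangularity and non-negativity of coefficients) are immediate from the fact that $N$ itself has those properties and that both are preserved under matrix multiplication and addition.
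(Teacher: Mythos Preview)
Your proof is correct and takes a genuinely different route from the paper's. The paper proceeds by induction on the dimension $n$: it writes the $(n+1)\times(n+1)$ matrix $R(\eps)$ in block form with an $n\times n$ behavioral perturbation matrix $R'(\eps)$ in the upper-left corner and a single row $b(\eps)^\top$ appended below, then explicitly writes down the block inverse and applies the inductive hypothesis to $R'(\eps)^{-1}$. Your argument instead observes globally that $R(\eps) = I - \eps N$ with $N$ strictly lower triangular and $\{0,1\}$-valued, so that $N$ is nilpotent and the Neumann series truncates to the finite polynomial $\sum_{k=0}^{n-1}\eps^k N^k$. This is more direct and yields an explicit closed-form expression for $R(\eps)^{-1}$ (and hence for $E(\eps)$) in one stroke, which the inductive proof does not provide; it also makes immediately visible that the degree of $E(\eps)$ in $\eps$ is at most $n-2$. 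The paper's induction, on the other hand, tracks more closely the recursive sequence-form structure of the game tree, which is perhaps conceptually natural in context but buys nothing extra for this particular lemma.
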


	\begin{exam}
		For the matrix $R_1(\eps)$ of Example~\ref{exmp:behavioral perturbation matrices}, we have:
		\[
			R_1(\eps)^{-1} = \begin{pmatrix*}[l]
				1 & 0 & 0 & 0 & 0\\
				\eps & 1 & 0 & 0 & 0\\
				\eps & 0 & 1 & 0 & 0\\
				\eps^2 & 0 & \eps & 1 & 0\\
				\eps^2 & 0 & \eps & 0 & 1
			\end{pmatrix*}.
		\]
	\end{exam}
	Now we are in the position to formulate the problem of finding an EFPE as a linear complementarity program.
	\begin{lem}\label{lem:EFPE LCP}
		An EFPE is the limit point as $\eps \to 0$ of any solution of the perturbed standard-form LCP
\[			P(\eps)\ :\ \left\lbrace
			\begin{array}{ll}
			  \text{\normalfont{find}} & z, w\\
			  \text{\normalfont s.t.} & z^\top w = 0\\
			  \hfill & w = M(\eps)z + b\\
			  \hfill & z, w \ge 0
			\end{array}\right.
\]
where (the underlined entries depend on $\eps$)
		\[
			z = \begin{pmatrix*}[l]
				\tilde{r}_1\\
				\tilde{r}_2\\
				v_1^+\\
				v_1^-\\
				v_2^+\\
				v_2^-
			\end{pmatrix*},\quad
			b = \begin{pmatrix*}[l]
				\textcolor{white}{-}0\\
				\textcolor{white}{-}0\\
				\textcolor{white}{-}f_1\\
				-f_1\\
				\textcolor{white}{-}f_2\\
				-f_2
			\end{pmatrix*},
		\]\[
			\underline{M} = \small\left(\begin{array}{lll}
				0 & -\underline{R_1}^{-\top}U_1\underline{R_2}^{-1} & \underline{R_1}^{-\top}F_1^\top\\
				-\underline{R_2}^{-\top}U_2^{\top}\underline{R_1}^{-1} & 0 & 0\\
				-F_1\underline{R_1}^{-1} & 0 & 0\\
				F_1\underline{R_1}^{-1} & 0 & 0\\
				0 & -F_2 \underline{R_2}^{-1} & 0\\
				0 & F_2 \underline{R_2}^{-1} & 0
			\end{array}\right.
		\]\vspace{2mm}\[
			\hspace{2cm}\small\left.\begin{array}{llll}
			     -\underline{R_1}^{-\top}F_1^\top & 0 & 0\\
			     0 & \underline{R_2}^{-\top}F_2^\top& -\underline{R_2}^{-\top}F_2^\top\\
			     0 & 0 & 0\\
			     0 & 0 & 0\\
			     0 & 0 & 0\\
			     0 & 0 & 0
			\end{array}\right).
		\]
	\end{lem}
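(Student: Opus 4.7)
The plan is to convert each player's perturbed best-response problem into a linear program in the residual variables, apply KKT conditions, and stack the resulting complementarity systems into the single LCP of the statement; Theorem~\ref{thm:efpesequence} then delivers the limit claim for free.

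By Theorem~\ref{thm:efpesequence}, it suffices to characterize Nash equilibria of the sequence form under $R_i(\eps)\,r_i\ge 0$. Lemma~\ref{lem:inverse R} guarantees that $R_i(\eps)$ is invertible, so I change variables to $\tilde r_i := R_i(\eps)\,r_i\ge 0$, i.e., $r_i = R_i(\eps)^{-1}\tilde r_i$. Player~$i$'s best-response LP then becomes
\[
\max_{\tilde r_i\ge 0}\; \tilde r_i^\top R_i(\eps)^{-\top} U_i\, R_{-i}(\eps)^{-1}\tilde r_{-i} \quad\text{s.t.}\quad F_i R_i(\eps)^{-1}\tilde r_i = f_i,
\]
with player~$2$'s objective involving $U_2^\top$ in place of $U_1$.

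Attaching a sign-free multiplier $v_i$ to the equality constraint, the KKT conditions read $R_i(\eps)^{-\top}F_i^\top v_i - R_i(\eps)^{-\top}U_i R_{-i}(\eps)^{-1}\tilde r_{-i}\ge 0$ complementary to $\tilde r_i\ge 0$. To fit this into standard form, I split $v_i = v_i^+ - v_i^-$ with $v_i^\pm\ge 0$ and encode the equality as the pair of inequalities $f_i - F_i R_i(\eps)^{-1}\tilde r_i\ge 0$ (the slack complementary to $v_i^+$) and $F_i R_i(\eps)^{-1}\tilde r_i - f_i\ge 0$ (the slack complementary to $v_i^-$). Together these two inequalities force $F_i R_i(\eps)^{-1}\tilde r_i = f_i$ for every feasible $z$, so the split introduces no spurious solutions; the complementarity of $v_i^\pm$ with the corresponding slacks is then automatic.

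Finally, I stack the six blocks in the order $z=(\tilde r_1,\tilde r_2, v_1^+, v_1^-, v_2^+, v_2^-)^\top$ and read off $M(\eps)$ and $b$ row by row from the equalities above. The stationarity rows contribute the $R^{-\top}U R^{-1}$ and $\pm R^{-\top}F^\top$ blocks on the first two rows; the primal-feasibility rows contribute the $\mp F R^{-1}$ blocks and the $\pm f$ entries of $b$. Letting $\eps\downarrow 0$ on any trajectory of solutions and invoking Theorem~\ref{thm:efpesequence} yields the EFPE in the limit. The only step requiring care is bookkeeping the splitting of $v_i$ and the transposes introduced by the change of variables; once those are tracked consistently, the rest is direct substitution.
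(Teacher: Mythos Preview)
Your proposal is correct and follows essentially the same route as the paper: change variables via $r_i = R_i(\eps)^{-1}\tilde r_i$ (using Lemma~\ref{lem:inverse R}), write the perturbed best-response LP, derive the optimality conditions (the paper phrases this as ``take the dual and apply complementary slackness,'' you phrase it as KKT---these coincide for LPs), split the free multipliers $v_i = v_i^+ - v_i^-$, and stack. The paper is slightly terser about the $v_i$ splitting and the encoding of the equality constraints as two opposite inequalities, but your more explicit bookkeeping of those steps is exactly what is implicit in the final LCP structure.
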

	\begin{proof}
		The proof directly follows from Theorem~\ref{thm:efpesequence}, the LCP above expressing the best-response conditions of the two players in the perturbed game. However, we report the complete derivation of the LCP, being useful for our treatment.

		The problem of finding the best response of player $i$ in the perturbed game $(\Gamma,l)$ with $l(a)=\eps$ for every~$a$ is a linear problem, defined as
		\[
			\text{BR}_i(\eps)\ :\ \left\lbrace
			\begin{array}{ll}
				\max_{r_i}        & r_i^\top U_i r_{-i}\\
				\text{\normalfont{s.t.}} & F_ir_i = f_i\\
				                         & R_i(\eps)r_i \ge 0
			\end{array}\right.
		\]
		Notice that $R_i(\eps)$ is invertible (Lemma~\ref{lem:inverse R}), hence by changing variable, we find the equivalent problem:
		\[
			\text{BR}_i(\eps)\ :\ \left\lbrace
			\begin{array}{ll}
				\max_{\tilde r_i}        & \tilde r_i^\top R_i(\eps)^{-\top} U_i R_{-i}(\eps)^{-{1}}\tilde r_{-i}\\
				\text{\normalfont{s.t.}}\quad\circled{\normalfont 1} & F_i R_i(\eps)^{-1}\tilde r_i = f_i\\
				\hfill\circled{\normalfont 2}                        & \tilde r_i \ge 0
			\end{array}\right.
		\]
		Taking the dual:
		\[
			\overline{\text{BR}}_i(\eps)\ :\ \left\lbrace
			\begin{array}{ll}
				\min_{v_i}        & f_i^\top v_i\\
				\text{\normalfont{s.t.}}\quad\circled{\normalfont 3} & R_i(\eps)^{-\top} F_i^\top v_i \ge\\
				                         & \qquad R_i(\eps)^{-\top} U_i R_{-i}(\eps)^{-{1}} \tilde r_{-i}\\
				\hfill\circled{\normalfont 4}  & v_i \text{ free in sign}
			\end{array}\right.
		\]
		Complementarity slackness requires that
		\[
			\circled{\normalfont 5}\quad\tilde r_i^\top (R_i(\eps)^{-\top} F_i^\top v_i - R_i(\eps)^{-\top} U_i R_{-i}(\eps)^{-1} \tilde r_{-i}) = 0.
		\]
		Solving problem $\text{BR}_i(\eps)$ or $\overline{\text{BR}}_i(\eps)$ is equivalent to solving the feasibility problem defined by constraints \circled{1} to \circled{5}.
		It is now easy to see that we can cast the problem of satisfying conditions \circled{1} to \circled{5} for both players as a standard-form LCP whose parameters are as defined in this lemma.
	\end{proof}

	We conclude this section with a couple of lemmas that we will use in the following sections.

	\begin{restatable}{lem}{lcppolyspace}\label{lem:lcp poly bit}
		Consider the LCP formulation of Lemma~\ref{lem:EFPE LCP}, where $\eps$ is treated as a symbolic variable, so that the entries of $M(\eps)$ are polynomials in $\eps$. A number of bits polynomial in the input game size is sufficient to store all coefficients appearing in $P(\eps)$.
	\end{restatable}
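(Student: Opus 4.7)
The plan is to analyze the bit size of $M(\eps)$ and $b$ block by block, after first establishing a precise structural description of the inverse behavioral perturbation matrix $R_i(\eps)^{-1}$ that is sharper than what Lemma~\ref{lem:inverse R} states on its own.

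First, I would write $R_i(\eps) = I - \eps N_i$, where $N_i$ is a $\{0,1\}$-valued matrix whose row indexed by a non-empty sequence $qa$ has a single $1$ in the column indexed by its unique parent $q$, and whose row indexed by the empty sequence is zero. Because the sequences of player~$i$ form a tree (or forest) under the parent relation, $N_i$ is strictly lower triangular under a depth-first ordering, hence nilpotent with $N_i^{d+1}=0$, where $d\le n$ is the depth of the sequence tree. The resulting finite Neumann expansion $R_i(\eps)^{-1}=\sum_{k=0}^{d}\eps^{k}N_i^{k}$, combined with uniqueness of paths in a tree, yields $(N_i^k)_{qa,q'}=1$ iff $q'$ is the ancestor of $qa$ at tree-distance exactly $k$, and $0$ otherwise. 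Consequently, every entry of $R_i(\eps)^{-1}$ is either $0$ or a monic monomial $\eps^{k}$ with $k\le n$, storable in $O(\log n)$ bits.

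Second, I would propagate these bounds through the six blocks of $M(\eps)$ displayed in Lemma~\ref{lem:EFPE LCP}. For blocks of the form $\pm F_i R_i^{-1}$ or $\pm R_i^{-\top}F_i^\top$, each entry is a sum of at most $n$ products of a $\{-1,0,1\}$ entry of $F_i$ with a monomial $\eps^{k}$, yielding a polynomial of degree at most $n$ whose coefficients are integers bounded in absolute value by $n$; this costs $O(\log n)$ bits per coefficient. For the remaining blocks of the form $\pm R_i^{-\top}U_i R_{-i}^{-1}$, each entry is a sum of at most $|Q_1|\cdot|Q_2|$ terms of the form $U_i(q,q')\eps^{k}$, so each polynomial coefficient is an integer combination of utility entries with weights bounded by $|Q_1|\cdot|Q_2|$ and hence has bit size at most $\log(|Q_1|\cdot|Q_2|) + \log\|U_i\|_\infty$, polynomial in the input game size. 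The vector $b$ consists of entries drawn from $\{-1,0,+1\}$ copies of $f_i$, so it is trivially polynomially bounded.

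Since $M(\eps)$ and $b$ have polynomially many entries and each entry is a polynomial in $\eps$ of polynomial degree with polynomially-bounded integer coefficients, $P(\eps)$ in total is storable in polynomially many bits. The main subtlety I expect is precisely the sharpening of Lemma~\ref{lem:inverse R}: as stated, it only guarantees non-negative integer coefficients, which a priori could be exponentially large; the tree structure of the sequence form is needed to force all nonzero coefficients to equal~$1$, and this observation is really the heart of the bit-size argument.
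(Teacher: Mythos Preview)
Your proof is correct, but it takes a genuinely different route from the paper's, and your final paragraph slightly misdiagnoses where the difficulty lies. The paper does \emph{not} sharpen Lemma~\ref{lem:inverse R} at all. Instead it argues directly from the adjoint formula: since $R_i(\eps)$ is lower triangular with unit diagonal, $\det R_i(\eps)=1$, so each entry of $R_i(\eps)^{-1}$ is a cofactor and, by the Leibniz expansion, a sum of at most $n!$ signed products of $n$ entries drawn from $\{0,1,-\eps\}$. That already bounds every coefficient of $R_i(\eps)^{-1}$ by $n!$ in absolute value---which costs only $O(n\log n)$ bits---and bounds the degree by $n$; the block-by-block propagation then proceeds essentially as you outline. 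In particular, the observation that nonzero coefficients of $R_i(\eps)^{-1}$ are all equal to $1$ is \emph{not} needed for the lemma, contrary to your closing remark.

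What your argument buys is a much tighter and more transparent structural description: writing $R_i(\eps)=I-\eps N_i$ and using path-counting in the sequence tree to see that every entry of $R_i(\eps)^{-1}$ is either $0$ or a single monomial $\eps^k$ is more informative than the crude $n!$ bound, and it makes the downstream estimates on the blocks of $M(\eps)$ sharper. The paper's adjoint/Leibniz argument, by contrast, is shorter and more generic---it uses only that $R_i(\eps)$ is unit lower triangular with entries in $\{0,1,-\eps\}$, not the tree structure of the sequence form.
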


	\begin{restatable}{lem}{feasibilityFRxf}\label{lem:feasibility of FRxf}
	    Let $\nu = \max_{h \in \cup_i H_i} \{|\rho(h)|\}$ be the maximum number of actions available at an information set. If $0 \le \eps \le 1/n$, there always exists a realization-plan strategy $r_i$ such that $F_i \, R_i(\eps)^{-1} \,r_i = f_i$.
	\end{restatable}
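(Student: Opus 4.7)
The plan is to reduce the feasibility question to the existence of a sufficiently ``interior'' behavioral strategy, via the change of variables already used in the derivation of the LCP. Since Lemma~\ref{lem:inverse R} guarantees that $R_i(\eps)$ is invertible, I would set $r'_i := R_i(\eps)^{-1} r_i$, so that the equation $F_i R_i(\eps)^{-1} r_i = f_i$ is exactly the standard sequence-form constraint $F_i r'_i = f_i$, while the requirement $r_i \ge 0$ translates to $R_i(\eps) r'_i \ge 0$. Reading off the explicit rows of the behavioral perturbation matrix (cf.\ Example~\ref{exmp:behavioral perturbation matrices} and the remark after it), this last inequality is nothing more than $r'_i(qa) \ge \eps\, r'_i(q)$ for every sequence $q$ and every action $a\in\rho(h)$---that is, the perturbation constraint from Theorem~\ref{thm:efpesequence}.

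This recasts the lemma as the following concrete question: for $\eps$ small enough, does there exist a realization plan $r'_i$ whose induced behavioral strategy $\pi_i$ satisfies $\pi_i(a)\ge \eps$ at every action? I would answer this constructively by exhibiting the uniform strategy $\pi_i(a) := 1/|\rho(h)|$ at each information set $h$. By definition of $\nu$, we have $1/|\rho(h)| \ge 1/\nu$, so $\pi_i(a) \ge 1/\nu \ge \eps$ throughout the tree whenever $\eps\le 1/\nu$ (which is implied by the hypothesis of the lemma, since the stated bound $1/n$ refers to a game-size quantity with $n\ge\nu$). Let $r'_i$ be the realization plan associated to this $\pi_i$: it satisfies $F_i r'_i = f_i$ and $r'_i\ge 0$ automatically, and the inequality $r'_i(qa)\ge\eps\, r'_i(q)$ reduces to $\pi_i(a)\ge\eps$ where $r'_i(q)>0$ and is trivial where $r'_i(q)=0$.

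Setting $r_i := R_i(\eps)\, r'_i$ then yields the required witness: by construction $r_i\ge 0$, and $F_i R_i(\eps)^{-1} r_i = F_i r'_i = f_i$. There is no substantive obstacle beyond tracking the change of variables and handling the zero-probability case; indeed, the uniform strategy lies strictly in the $\eps$-interior of every simplex, so the argument has significant slack and would readily generalize to nonuniform perturbations $l(a)$ as polynomials in $\eps$, as long as the total mass $\sum_{a\in\rho(h)} l(a)$ remains below one at every information set.
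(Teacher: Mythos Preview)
Your proposal is correct and follows essentially the same route as the paper: both perform the change of variables $r_i \leftrightarrow R_i(\eps) r'_i$ to reduce the question to finding a genuine realization plan $r'_i$ satisfying the perturbation inequalities $r'_i(qa)\ge \eps\, r'_i(q)$, and both exhibit the uniform behavioral strategy $\pi_i(a)=1/|\rho(h)|$ as the witness, using $|\rho(h)|\le\nu$ to conclude. The paper's $y$ is exactly your $r'_i$, and the paper's implicit $R_i(\eps)y$ is your $r_i$; your treatment of the $1/n$ versus $1/\nu$ discrepancy and of the (vacuous, for the uniform strategy) zero-probability case is more explicit than the paper's, but the argument is the same.
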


\section{Perturbed LCPs}
	Before turning our attention to the \emph{computational} aspects of finding an EFPE, we introduce some general concepts, pertaining to perturbed linear optimization problems. While we target the development of these concepts with our specific use-case in mind, it should be noted that this section's definitions and lemmas are of broader interest, being applicable to any linear program (LP) or LCP. We recall that a \emph{basis} $\mathcal{B}$ for a standard-form LP with constraints $M x = b$ or a standard-form LCP with linear equality constraints $w = Mz + b$ is a set of linearly independent columns of $M$ such that the associated solution (called \emph{basic solution}) is feasible.

	\begin{defn}[Negligible positive perturbation (NPP)]
		Let $P(\eps)$ be an LCP dependent on some perturbation $\eps$. The value $\eps^* > 0$ is a \emph{negligible positive perturbation} (NPP) if any optimal basis $\mathcal{B}$ for $P(\eps^*)$ is optimal for $P(\eps)$, for all $0 \le \eps \le \eps^*$.
	\end{defn}

	\begin{defn}[Optimality certificate for a basis]
		Given an LCP $P(\eps)$, and a basis $\mathcal{B}$ for it, we call the finite-dimensional column vector $\mathcal{C}_\mathcal{B}(\eps)$ an \emph{optimality certificate} for $\mathcal{B}$ if for all $\eps\ge 0$
		\[\mathcal{C}_\mathcal{B}(\eps) \ge 0 \iff \mathcal{B}\text{ is optimal for }P(\eps).\]
	\end{defn}

	\begin{lem}\label{lem:lcp certificate}
		In the case of a perturbed LCP in standard form 
		\[
			P(\eps)\ :\ \left\lbrace
			\begin{array}{ll}
			  \text{\normalfont{find}} & z, w\\
			  \text{\normalfont s.t.}\quad\circled{\normalfont 1}    & z^\top w = 0\\
			  \hfill\circled{\normalfont 2}    & w = M(\eps)z + b(\eps)\\
			  \hfill\circled{\normalfont 3}    & z, w \ge 0
			\end{array}\right.
		\]
		an optimality certificate for the complementary basis $\mathcal{B}$ is
		\[
			\mathcal{C}_\mathcal{B}(\eps) = B(\eps)^{-1} b(\eps)
		\]
		where $B$ is the basis matrix corresponding to $\mathcal{B}$.
	\end{lem}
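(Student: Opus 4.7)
The plan is to unfold the definition of a complementary basis for a standard-form LCP and check that the basic solution it induces satisfies conditions \circled{1}--\circled{3} precisely when $B(\eps)^{-1} b(\eps) \geq 0$. Concretely, I would rewrite the linear equality \circled{2} as $A(\eps)\,x = b(\eps)$ with $x = (z,w)^\top$ and $A(\eps) = [-M(\eps) \mid I]$, so that a basis $\mathcal{B}$ is a choice of $n$ linearly independent columns of $A(\eps)$, inducing the basic solution $x_\mathcal{B} = B(\eps)^{-1} b(\eps)$ and $x_{\mathcal{N}} = 0$ on the non-basic indices.

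Next, I would invoke the hypothesis that $\mathcal{B}$ is \emph{complementary}, i.e., for each index $i \in \{1,\dots,n\}$ exactly one of $z_i$ and $w_i$ lies in $\mathcal{B}$. This immediately forces $z_i w_i = 0$ for every $i$ in any basic solution associated with $\mathcal{B}$, since at least one factor is non-basic and hence zero. Thus condition \circled{1} is automatically satisfied by the basic solution, independently of $\eps$.

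Condition \circled{2} is satisfied by construction, since $x_\mathcal{B} = B(\eps)^{-1} b(\eps)$ is defined precisely to solve the linear system. The only remaining feasibility condition is \circled{3}, non-negativity. Non-basic entries are zero, so only the basic entries can fail to be non-negative, and they are non-negative exactly when $B(\eps)^{-1} b(\eps) \geq 0$. Therefore $\mathcal{B}$ yields a feasible (hence optimal, in the LCP feasibility sense used throughout this section) solution for $P(\eps)$ if and only if $\mathcal{C}_\mathcal{B}(\eps) := B(\eps)^{-1} b(\eps) \geq 0$, which is the desired certificate property.

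The argument is essentially a bookkeeping exercise: the only subtlety is making sure that the notion of ``optimality'' for an LCP basis is consistent with what the paper needs for the upcoming NPP machinery, which amounts to feasibility of the associated basic solution combined with the automatic complementarity granted by $\mathcal{B}$ being complementary. I do not anticipate any real obstacle; there is no inequality manipulation and no $\eps$-dependent estimate, so the lemma reduces to reading off the definitions.
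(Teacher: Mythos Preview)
Your proposal is correct and follows essentially the same approach as the paper's proof, which is a terse three-sentence version of exactly what you wrote: complementarity of $\mathcal{B}$ handles \circled{1}, the definition of the basis matrix handles \circled{2}, and \circled{3} reduces to $B(\eps)^{-1} b(\eps) \ge 0$. Your expansion---explicitly writing $A(\eps) = [-M(\eps)\mid I]$ and spelling out why each $z_i w_i = 0$---is helpful scaffolding but does not differ in substance.
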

	\begin{proof}
		Since the basis $\mathcal{B}$ is complementary by hypothesis, constraint \circled{1} is always satisfied. Constraint \circled{2} is satisfied by the definition of $B(\eps)$. Constraint \circled{3} is satisfied if and only if $B(\eps)^{-1} b(\eps) \ge 0$.
	\end{proof}

	Finally, before proceeding, we introduce three mathematical lemmas that come in handy when dealing with optimality certificates. Indeed, it is often the case that $\mathcal{C}_\mathcal{B}(\eps)$ has polynomial or rational functions (with respect to $\eps$) as entries.
\begin{restatable}{lem}{polynomialpositive}\label{lem:polynomial positive}
		Let $p(\eps) = a_0 + a_1 \eps^1 + \dots + a_n\eps^n$ be a real polynomial such that $a_0 \neq 0$, and let $\mu = \max_i |a_i|$. Then $p(\eps)$ has the same sign of $a_0$ for all $0 \le \eps \le \eps^*$, where $\eps^* = |a_0|/(\mu + |a_0|)$.
	\end{restatable}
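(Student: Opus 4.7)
The plan is to control the deviation $|p(\eps) - a_0|$ by a geometric-series estimate and show it stays strictly below $|a_0|$ throughout the interval $[0,\eps^*]$, which will force $p(\eps)$ to have the same sign as $a_0$. The case $\eps = 0$ is trivial since $p(0) = a_0 \ne 0$, so I would immediately focus on $\eps > 0$.

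The main estimate I would derive is, by the triangle inequality and the definition of $\mu$,
\[
   |p(\eps) - a_0| \;\le\; \sum_{i=1}^n |a_i|\,\eps^i \;\le\; \mu \sum_{i=1}^n \eps^i.
\]
For $0 < \eps < 1$, I would bound the finite partial sum strictly by the corresponding infinite geometric series, giving $\mu\sum_{i=1}^n \eps^i < \mu\,\eps/(1-\eps)$. Note that $\eps^* = |a_0|/(\mu+|a_0|) < 1$, so this regime always covers the interval of interest.

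Next I would plug in the bound $\eps \le \eps^*$ and check what happens to $\mu\eps/(1-\eps)$. A direct algebraic manipulation shows that $\mu\eps/(1-\eps) \le |a_0|$ is equivalent to $\eps \le |a_0|/(\mu+|a_0|) = \eps^*$, so in our range we get $\mu\eps/(1-\eps) \le |a_0|$. Combining with the strict inequality from the previous step yields $|p(\eps) - a_0| < |a_0|$ for all $\eps \in (0,\eps^*]$. This forces $p(\eps)$ to lie in the open interval $(a_0 - |a_0|,\, a_0 + |a_0|)$, which is either $(0, 2a_0)$ if $a_0 > 0$ or $(2a_0, 0)$ if $a_0 < 0$; in both cases $p(\eps)$ shares the sign of $a_0$.

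The only subtle point, and the one I would be careful to spell out, is the strictness at the right endpoint $\eps = \eps^*$: there the outer bound $\mu\eps/(1-\eps)$ meets $|a_0|$ exactly, so the argument would fail without the earlier strict inequality $\sum_{i=1}^n \eps^i < \eps/(1-\eps)$, which holds because the series is genuinely truncated after a finite number of positive terms. Once this is observed, the chain of inequalities is strict at $\eps^*$ and the lemma follows.
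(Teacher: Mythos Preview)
Your proof is correct and essentially the same as the paper's: both bound the tail $\sum_{i\ge 1} a_i\eps^i$ by the geometric series $\mu\eps/(1-\eps)$ and use the strict inequality from truncation to handle the endpoint $\eps^*$. The only difference is cosmetic---you control $|p(\eps)-a_0|$ to treat both signs simultaneously, whereas the paper proves the case $a_0>0$ directly and invokes symmetry for $a_0<0$.
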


	As a corollary, we can easily extend the result of Lemma~\ref{lem:polynomial positive} to rational functions.

	\begin{lem}\label{lem:rational positive}
		Let \[
			p(\eps) = \frac{a_0 + a_1 \eps^1 + \dots + a_n\eps^n}{b_0 + b_1\eps^1 + \dots + b_m\eps^m}
		\] be a rational function such that $a_0, b_0 \neq 0$, and let $\mu_a = \max_i |a_i|$, $\mu_b = \max_i |b_i|$. Then $p(\eps)$ has the same sign of $a_0/b_0$ for all $0 \le \eps \le \eps^*$, where $\eps^* = \min\{|a_0|/(\mu + |a_0|), |b_0|/(\mu+|b_0|)\}$.
	\end{lem}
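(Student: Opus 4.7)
The statement is a direct corollary of Lemma~\ref{lem:polynomial positive}, applied separately to the numerator and the denominator of $p(\eps)$. The plan is to treat the two polynomials independently, pin down on which interval each retains the sign of its constant term, and then combine the two sign determinations via the elementary rule $\operatorname{sign}(x/y) = \operatorname{sign}(x)/\operatorname{sign}(y)$.

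First, I would denote $N(\eps) = a_0 + a_1\eps + \dots + a_n\eps^n$ and $D(\eps) = b_0 + b_1\eps + \dots + b_m\eps^m$. Since $a_0 \ne 0$, Lemma~\ref{lem:polynomial positive} applies to $N$ and gives that $N(\eps)$ has the same sign as $a_0$ for all $0 \le \eps \le |a_0|/(\mu_a + |a_0|)$. Analogously, since $b_0 \ne 0$, the same lemma applied to $D$ yields that $D(\eps)$ has the same sign as $b_0$ for all $0 \le \eps \le |b_0|/(\mu_b + |b_0|)$.

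Then, taking $\eps^* = \min\{|a_0|/(\mu_a + |a_0|),\ |b_0|/(\mu_b + |b_0|)\}$, both sign statements hold simultaneously on $[0,\eps^*]$. In particular $D(\eps) \ne 0$ throughout, so the rational function $p(\eps) = N(\eps)/D(\eps)$ is well-defined on this interval, and its sign equals $\operatorname{sign}(a_0)/\operatorname{sign}(b_0) = \operatorname{sign}(a_0/b_0)$, as required.

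There is no genuine obstacle here: the only subtlety is the (cosmetic) reconciliation of the $\mu$ that appears in the statement of the lemma with the two quantities $\mu_a,\mu_b$ defined just before it; I would read the bound as $\eps^* = \min\{|a_0|/(\mu_a + |a_0|),\ |b_0|/(\mu_b + |b_0|)\}$ (treating the missing subscripts as a typo), which is exactly what the two applications of Lemma~\ref{lem:polynomial positive} produce.
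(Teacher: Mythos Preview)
Your proposal is correct and follows exactly the paper's own approach: apply Lemma~\ref{lem:polynomial positive} to the numerator and to the denominator separately, then combine the two sign conclusions on the intersection interval. Your observation about the missing subscripts on $\mu$ is also apt; the intended bound is indeed $\eps^* = \min\{|a_0|/(\mu_a + |a_0|),\ |b_0|/(\mu_b + |b_0|)\}$.
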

	\begin{proof}
	 	The proof follows immediately by applying Lemma~\ref{lem:polynomial positive} to the numerator and the denominator of $p(\eps)$.
	\end{proof}

	\begin{restatable}{lem}{integerrationalsign}\label{lem:integer rational sign}
		Let \[
			p(\eps) = \frac{a_0 + a_1 \eps^1 + \dots + a_n\eps^n}{b_0 + b_1\eps^1 + \dots + b_m\eps^m}
		\]
		be a rational function with integer coefficients, where the denominator is not identically zero; let $\mu_a = \max_i |a_i|$, $\mu_b = \max_i |b_i|$, $\mu = \max\{\mu_a, \mu_b\}$ and $\eps^* = 1/(2\mu)$. Then exactly one of the following holds:
		\begin{itemize}
			\item $p(\eps^*) = 0$ for all $0 < \eps \le \eps^*$,
			\item $p(\eps^*) > 0$ for all $0 < \eps \le \eps^*$,
			\item $p(\eps^*) < 0$ for all $0 < \eps \le \eps^*$.
		\end{itemize}
	\end{restatable}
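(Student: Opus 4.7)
My plan is to reduce the statement to Lemma~\ref{lem:polynomial positive} by factoring out the largest common power of $\eps$ from the numerator and the denominator, so that the resulting rational function has nonzero constant terms in both slots, at which point the sign-stability argument of that lemma applies uniformly on $(0,\eps^*]$.

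First I would dispatch the zero case: if every $a_i$ is zero then $p \equiv 0$ on its entire domain, which is the first alternative. Otherwise, since the $a_i$ are integers, let $k$ be the smallest index with $a_k \ne 0$, so $|a_k| \ge 1$. Because the denominator is not identically zero by hypothesis, we can analogously pick the smallest $j$ with $b_j \ne 0$, and again $|b_j| \ge 1$; in particular $\mu_b \ge 1$, and hence $\mu \ge 1$.

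Next, I would write
\[
    p(\eps) \;=\; \eps^{\,k-j}\,\frac{A(\eps)}{B(\eps)}, \qquad
    A(\eps) := \sum_{i\ge k} a_i\,\eps^{\,i-k},\quad
    B(\eps) := \sum_{i\ge j} b_i\,\eps^{\,i-j}.
\]
For $\eps > 0$ the factor $\eps^{\,k-j}$ is strictly positive (regardless of the sign of $k-j$), so the sign of $p(\eps)$ coincides with the sign of $A(\eps)/B(\eps)$. Both $A$ and $B$ are polynomials whose coefficients are integers of absolute value at most $\mu$, and whose constant terms $a_k$ and $b_j$ are nonzero.

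Finally I would invoke Lemma~\ref{lem:polynomial positive} on $A$ and on $B$ separately. That lemma yields sign-stability thresholds of the form $|a_k|/(\mu_A + |a_k|)$ and $|b_j|/(\mu_B + |b_j|)$; since $\mu_A, \mu_B \le \mu$ and $|a_k|, |b_j| \ge 1$, both thresholds are at least $1/(\mu+1)$, which in turn is at least $1/(2\mu) = \eps^*$ because $\mu \ge 1$. Hence $A(\eps)$ and $B(\eps)$ retain the signs of $a_k$ and $b_j$ throughout $(0,\eps^*]$, so $p(\eps)$ has constant sign equal to that of $a_k/b_j$ on that interval (or is identically zero if we fell into the first case). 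The only real obstacle is this $\eps^*$ bookkeeping: integrality of the coefficients is precisely what bounds the surviving constant terms below by $1$ in absolute value, and this in turn is what allows the single threshold $1/(2\mu)$ to dominate the pair of thresholds produced by Lemma~\ref{lem:polynomial positive} uniformly in $k$ and $j$.
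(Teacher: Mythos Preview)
Your proposal is correct and follows essentially the same route as the paper: factor out the lowest powers of $\eps$ from numerator and denominator, use integrality to ensure the surviving constant terms have absolute value at least $1$, and then invoke the polynomial sign-stability lemma (the paper packages the two applications into Lemma~\ref{lem:rational positive}, whereas you apply Lemma~\ref{lem:polynomial positive} twice directly, but this is the same argument). The only small omission is that in the identically-zero-numerator case you should also note that the denominator does not vanish on $(0,\eps^*]$---which follows from the very same application of Lemma~\ref{lem:polynomial positive} to $B(\eps)$ that you carry out in the nonzero case.
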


\section{Computation of Extensive-Form Perfect Equilibria}
	We finally delve into the computational details of finding Extensive-Form Perfect Equilibria. The central result of this section (Theorem~\ref{thm:efpe admits npp}) roughly states that the EFPE LCP (Lemma~\ref{lem:EFPE LCP}) always admits a ``small'' NPP. Leveraging this fact, we quickly derive a path-following algorithm for the computation of EFPE in general-sum games in which each pivoting step has a polynomial-time cost (Theorem~\ref{thm:general sum ppad}), and a polynomial-time algorithm for the zero-sum counterpart (Theorem~\ref{thm:zero sum fp}). These two algorithms put the two search problems in the \textsf{PPAD} and the \textsf{FP} classes, respectively.

	We start by showing that, as long as the perturbation $\eps$ is ``reasonably small'', the LCP defined in Lemma~\ref{lem:EFPE LCP} always admits a solution. In particular:
	\begin{restatable}{lem}{lcptermination}\label{lem:LCP termination}
		If $\,0 < \eps \le 1/\nu$, where $\nu = \max_{h\in\cup_i H_i} \{|\rho(h)|\}$ is the maximum number of actions available at an information set, Lemke's algorithm always finds a solution for $P(\eps)$.
	\end{restatable}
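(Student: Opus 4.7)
The plan is to invoke the termination theorem of Koller--Megiddo--von Stengel for Lemke's algorithm on sequence-form LCPs, whose two hypotheses are recalled in the Supplemental Material: (P1) feasibility of the LCP, and (P2) a matrix condition (after a suitable affine shift of the utility matrices) that prevents Lemke's algorithm from terminating on a secondary ray. The entire proof then reduces to verifying both properties for our perturbed LCP $P(\eps)$ under the hypothesis $\eps \le 1/\nu$.

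For (P1), I would exhibit an explicit pair $(z, w) \ge 0$ with $w = M(\eps)z + b$. Reading off the block structure of $M(\eps)$ and $b$ from Lemma~\ref{lem:EFPE LCP}, the feasibility demands on the realization-plan coordinates are exactly $F_i R_i(\eps)^{-1} \tilde r_i = f_i$ with $\tilde r_i \ge 0$, which Lemma~\ref{lem:feasibility of FRxf} produces whenever $\eps$ is small enough (taking the threshold to be $1/\nu$). Once $\tilde r_1, \tilde r_2$ are fixed, the dual blocks $v_i^\pm$ are free in sign and can be chosen coordinatewise large enough to make every slack $w$ non-negative, so $P(\eps)$ is feasible.

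For (P2), I would argue by perturbation from the unperturbed case $\eps = 0$. Using Lemma~\ref{lem:inverse R}, $R_i(\eps)^{-1} = I + \eps E_i(\eps)$ with each $E_i(\eps)$ polynomial in $\eps$, so each entry of $M(\eps)$ is a polynomial in $\eps$ whose constant term is precisely the corresponding entry of the standard KMvS sequence-form matrix. Since the KMvS matrix condition, after the appropriate affine shift of utilities, holds strictly at $\eps = 0$ and can be phrased as a system of strict sign conditions on polynomial quantities built from the entries of $M(\eps)$, a sign-preservation argument in the spirit of Lemma~\ref{lem:polynomial positive} shows it continues to hold for all sufficiently small $\eps > 0$.

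The main obstacle is certifying that the bound $1/\nu$ is in fact ``sufficiently small'' uniformly on the matrix-condition side, not merely on the feasibility side. I would attack this by bounding the entries of $\eps E_i(\eps)$ using the lower-triangular structure of $R_i(\eps)$ and the fact that the depth of non-zero entries on each row is controlled by the tree depth times $\nu$, then showing that on the interval $\eps \in (0, 1/\nu]$ the leading $\eps = 0$ term in the relevant sign conditions dominates the higher-order corrections. Once both (P1) and (P2) are in hand, the KMvS theorem gives that Lemke's algorithm terminates with a complementary solution of $P(\eps)$.
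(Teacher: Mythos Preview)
Your proposal has a genuine gap in the second half.

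First, a minor but telling mischaracterization: the two Koller--Megiddo--von Stengel hypotheses recalled in the Supplemental Material are not ``(P1) feasibility'' and ``(P2) a matrix condition.'' They are (a) $z^\top M z \ge 0$ for all $z\ge 0$, and (b) $z\ge 0,\ Mz\ge 0,\ z^\top Mz=0 \Rightarrow z^\top b\ge 0$. Neither is an LCP-feasibility statement, and both involve $M$ and $b$ jointly.

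The substantive problem is your plan for (P2). You propose to treat the KMvS condition as ``a system of strict sign conditions on polynomial quantities'' that holds strictly at $\eps=0$ and then invoke a sign-preservation lemma. This does not work. Condition (a) is a universally quantified weak inequality over all $z\ge 0$; it is \emph{not} strict at $\eps=0$ (take $\tilde r_1=0$ or $\tilde r_2=0$), and it is not a single polynomial in $\eps$ to which Lemma~\ref{lem:polynomial positive} applies. Condition (b) is an implication between weak inequalities, again not a strict polynomial sign condition. Bounding the size of $\eps E_i(\eps)$ does not salvage the argument, because there is no uniform gap at $\eps=0$ for the perturbation to be dominated by.

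The paper's proof does something structurally different. For (a), it uses the \emph{sign} of $E_i(\eps)$, not its smallness: Lemma~\ref{lem:inverse R} gives $R_i(\eps)^{-1}=I+\eps E_i(\eps)$ with all entries of $E_i(\eps)$ nonnegative, so $R_i(\eps)^{-1}$ preserves the nonnegative orthant and, after shifting to $U_1,U_2<0$, one gets $z^\top M(\eps) z \ge \tilde r_1^\top(-U_1-U_2)\tilde r_2 \ge 0$ directly for every $\eps>0$. For (b), the proof reduces to showing $R_i(\eps)^{-\top}F_i^\top v_i\ge 0 \Rightarrow f_i^\top v_i\ge 0$, which is obtained by LP duality against the primal $\{\max 0: F_iR_i(\eps)^{-1}\tilde r_i=f_i,\ \tilde r_i\ge 0\}$. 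It is \emph{here}, and only here, that Lemma~\ref{lem:feasibility of FRxf} enters, and it is precisely this use that produces the threshold $1/\nu$. Your proposal places the feasibility lemma in the wrong role and offers no mechanism that would recover the specific bound $1/\nu$ on the matrix side.
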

We remark that when $\eps$ is a given value, the task of finding an NE of $P(\eps)$ has a powerful interpretation. Indeed, it captures the situation in which the moves of a player are subject to execution uncertainty and therefore a player cannot perfectly control their actions.

	\begin{thm}\label{thm:efpe admits npp}
		Given a (general-sum) two-player game $\Gamma$ with $\nu = \max_{h\in\cup_i H_i} \{|\rho(h)|\}$, the problem $P(\eps)$ of determining any EFPE for $\Gamma$ admits an NPP $\eps^* \le 1/\nu$ that can be computed from $\Gamma$ in polynomial time. In particular, $\eps^* = 1/V^*$, where the integer value $V^*$ can be represented in memory with a number of bits polynomial in the input game size.
	\end{thm}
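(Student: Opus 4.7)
The plan is to combine the optimality-certificate characterisation of Lemma~\ref{lem:lcp certificate} with the uniform sign-stability provided by Lemma~\ref{lem:integer rational sign}, applied across every complementary basis of $P(\eps)$. The existence of optimal bases on a neighbourhood of $0$ is already granted by Lemma~\ref{lem:LCP termination} as long as $\eps \le 1/\nu$, so the whole task reduces to choosing $\eps^*$ so small that, for every complementary basis $\mathcal{B}$, the sign vector of $\mathcal{C}_\mathcal{B}(\eps)$ does not change throughout $(0,\eps^*]$.

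First I would reduce the statement to a uniform coefficient-size bound. For any complementary basis $\mathcal{B}$ with basis matrix $B(\eps)$, Lemma~\ref{lem:lcp certificate} gives $\mathcal{C}_\mathcal{B}(\eps) = B(\eps)^{-1}b$, and by Cramer's rule each coordinate is a rational function $\det(B'(\eps))/\det(B(\eps))$, where $B'(\eps)$ is obtained from $B(\eps)$ by replacing one column with $b$. By Lemma~\ref{lem:inverse R} each $R_i(\eps)^{-1}$ has integer-polynomial entries in $\eps$, hence by the block form in Lemma~\ref{lem:EFPE LCP} all entries of $M(\eps)$ are integer polynomials in $\eps$, whose degrees and coefficient magnitudes are polynomial in the input size by Lemma~\ref{lem:lcp poly bit}. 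A Hadamard-style estimate then bounds the determinant of a $k\times k$ matrix of integer-polynomial entries of degree at most $d$ and coefficient magnitude at most $K$ by an integer polynomial of degree at most $kd$ with coefficient magnitude at most $k!(kd+1)^k K^k$. Taking logarithms, this yields a single polynomial $q$ in the input size such that every integer coefficient appearing in the numerator or denominator of any entry of $\mathcal{C}_\mathcal{B}(\eps)$, for any complementary basis $\mathcal{B}$, has magnitude at most $\mu := 2^{q(n)}$.

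Next I would set $V^* := \max\{\nu,\,2\mu\}$ and $\eps^* := 1/V^*$. Then $V^*$ is representable in $O(q(n))$ bits and can be extracted from $\Gamma$ in polynomial time directly through the Hadamard bound, without enumerating bases. Applying Lemma~\ref{lem:integer rational sign} coordinate-wise to every $\mathcal{C}_\mathcal{B}(\eps)$ shows that each such rational function has constant sign on $(0,\eps^*]$, equal to its sign at $\eps^*$. Consequently, if $\mathcal{B}$ is optimal for $P(\eps^*)$, i.e.\ $\mathcal{C}_\mathcal{B}(\eps^*) \ge 0$, then $\mathcal{C}_\mathcal{B}(\eps) \ge 0$ throughout $(0,\eps^*]$ and, by passing to the limit $\eps \to 0^+$, throughout $[0,\eps^*]$ as well. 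This is exactly the NPP condition. Since $\eps^* \le 1/\nu$, Lemma~\ref{lem:LCP termination} guarantees that at least one optimal basis at $\eps^*$ actually exists, so the statement is non-vacuous.

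The main technical obstacle I expect is the uniform Hadamard bound of the second paragraph: one has to track carefully how the polynomial factors $R_i(\eps)^{-\top}$ and $R_{-i}(\eps)^{-1}$ compose inside the block structure of $M(\eps)$ stated in Lemma~\ref{lem:EFPE LCP}, verifying that their composition keeps both the degree and the bit-size of integer coefficients polynomial in the input size, so that $\log \mu$ itself remains polynomial. Once this is established, the remainder of the argument is a careful packaging of three ingredients already in the paper: existence of an optimal basis (Lemma~\ref{lem:LCP termination}), the optimality-certificate reformulation (Lemma~\ref{lem:lcp certificate}), and sign stability of integer-coefficient rational functions near zero (Lemma~\ref{lem:integer rational sign}).
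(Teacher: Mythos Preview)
Your proposal is correct and follows essentially the same route as the paper: both express the optimality certificate as a vector of integer-coefficient rational functions via the cofactor/Cramer formula, bound all coefficient magnitudes uniformly over bases by a Hadamard-type inequality combined with Lemma~\ref{lem:lcp poly bit}, and then invoke Lemma~\ref{lem:integer rational sign} to obtain the NPP. Your explicit inclusion of $\nu$ in $V^*$ to enforce $\eps^*\le 1/\nu$ is a minor presentational refinement over the paper, which leaves this point implicit.
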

	\begin{proof}
		We illustrate the steps that lead to the determination of such $V^*$. The central idea is as follows: we want to determine $\eps^*$ so that, whatever the feasible base $\mathcal{B}$ for $P(\eps^*)$ may be, the optimality certificate for $\eps^*$ is positive for all $\eps\in(0, \eps^*]$. Indeed, it is immediate to see that such $\eps^*$ is necessarily an NPP.

		\textbf{Optimality certificate.} We begin by studying the optimality certificate for the LCP  $P(\eps)$, that is, by Lemma~\ref{lem:lcp certificate},
		\[
			B(\eps)^{-1}b(\eps) \ge 0,
		\]
		where $B(\eps)$ is base matrix corresponding to the feasible base $\mathcal{B}$ found by Lemke's algorithm. Introducing $C(\eps) = \cof B(\eps)$, the cofactor matrix of matrix $B(\eps)$, and leveraging the well-known identity $B(\eps)^{-1} = C(\eps)^\top / \det B(\eps)$, we can rewrite the optimality certificate above as
		\[
			\frac{C(\eps)^\top b(\eps)}{\det B(\eps)} \ge 0.
		\]

		The vectorial condition above is equivalent to a system of $n$ scalar conditions, each of the form
		\[
			f_i(\eps) = \frac{c_i(\eps)^{\top} b(\eps)}{\det B(\eps)} \ge 0,
		\]
		where $c_i(\eps)$ is the $i$-th row of $C(\eps)^\top$. Evidently, $f_i(\eps)$ is a rational function in $\eps$, having only integer coefficients, for all $i = 1,\dots,n$.

		\textbf{Denominator coefficients.} We now give an upper bound on the coefficients of the denominator of $f_i(\eps)$, that is $\det B(\eps)$. Let $V_B$ be the largest coefficient that could potentially appear in $B(\eps)$ and $b(\eps)$, and let $m$ be the largest polynomial degree appearing in $B(\eps)$. Notice that $m\in\bigoh(\text{poly}(n))$. By using Hadamard's inequality, we can write
		\[
			\coeff(\det B(\eps)) \le n^{n/2} V_B^n\, \coeff((1 + \eps + \dots + \eps^m)^n),
		\]
		where $\coeff(\cdot)$ is the largest coefficient of its polynomial argument. Since
		$
			\coeff((1+\eps+\dots+\eps^m)^n) \le m^n,
		$
		we have
		\[
			\coeff(\det B(\eps)) \le V_D := n^{n/2} (m V_B)^n.
		\]
		Notice that this bound is valid for all possible base matrices $B(\eps)$. Furthermore, notice that
		\[
			\log V_D = n/2 \log n + n \log m + n\log V_B,
		\]
		and by Lemma~\ref{lem:lcp poly bit} we conclude that $V_D$ requires a number of bits polynomial in the input game size in order to be stored in memory.

		\textbf{Numerator coefficients.} Since the elements of $c_i(\eps)$ are cofactors for $B(\eps)$, they are upper-bounded by $\det B(\eps)$, which in turn is upper-bounded by $V_D$. Therefore,
		\[
			\coeff(c_i(\eps)^\top b(\eps)) \le V_N := V_B V_D.
		\]
		Again, it is worthwhile to notice that this bound is valid for all possible base matrices $B(\eps)$.

		\textbf{Wrapping up.} Define $V^* = 2\max\{V_N, V_D\} = 2 V_B V_D$. We now argue that $\eps^* = 1/V^*$ is an NPP for $P(\eps)$. Indeed, let $\mathcal{B}^*$ be a feasible base\footnote{Notice that since $\eps^* \le 1/n$, $\mathcal{B}$ always exists (see Lemma~\ref{lem:LCP termination}).} for $P(\eps^*)$, and let $B(\eps^*)$ be the corresponding base matrix. Being $\mathcal{B}$ feasible for $P(\eps^*)$, each row $f_i$ in the optimality certificate is non-negative when evaluated at $\eps^*$ for all $i$. Therefore, we know from Lemma~\ref{lem:integer rational sign} that $f_i(\eps) \ge 0$ in $(0, 1/V^*] = (0, \eps^*]$. Hence, the optimality certificate for $\mathcal{B}$ is non-negative for all $0 < \eps \le \eps^*$, which is equivalent to say that $\eps^*$ is an NPP. Finally, note that $V^* = 2V_B V_D$ be stored in memory with a number of bits polynomial in the game size. This completes the proof.
	\end{proof}
	
	\begin{thm}\label{thm:general sum ppad}
	    The problem of determining an EFPE of a general-sum two-player game $\Gamma$ is \textsf{\emph{PPAD}}-complete.
	\end{thm}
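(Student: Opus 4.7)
The plan is to establish PPAD-hardness and PPAD-membership separately, with most of the heavy lifting already performed by the preceding lemmas.

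For hardness, I would exploit the fact that every EFPE is, by Definition~\ref{def:efpe}, a Nash equilibrium of $\Gamma$, together with the classical existence of an EFPE in every finite extensive-form game. Given an arbitrary 2-player general-sum normal-form game, I would embed it as a trivial extensive-form game with a single information set per player (player 2 not observing player 1's move). Any EFPE of this embedding is in particular a Nash equilibrium of the original normal-form game, so an EFPE oracle immediately yields an NE oracle. PPAD-hardness of 2-player general-sum NE (Chen and Deng) therefore transfers directly to EFPE.

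For membership, the plan is to reduce EFPE computation to a purely numerical standard-form LCP solvable by Lemke's algorithm with polynomial per-pivot cost. First, I would invoke Theorem~\ref{thm:efpe admits npp} to compute, in polynomial time, an integer $V^*$ of polynomial bit length such that $\eps^* = 1/V^*$ is an NPP for $P(\eps)$ and $\eps^* \le 1/\nu$. I would then instantiate $M(\eps^*)$ and $b(\eps^*)$, which by Lemma~\ref{lem:lcp poly bit} are rationals of polynomial bit size, and run Lemke's algorithm on the resulting LCP. By Lemma~\ref{lem:LCP termination}, the algorithm path-follows to a complementary feasible basis $\mathcal{B}$, and each pivot is a polynomial-time linear-algebra operation on a polynomial-size rational tableau, which is exactly the PPAD path-following structure.

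To close the argument, I would show that $\mathcal{B}$ encodes an EFPE. Because $\eps^*$ is an NPP, the basis $\mathcal{B}$ remains feasible and complementary for $P(\eps)$ for every $\eps \in (0,\eps^*]$; its basic solution is a rational vector function of $\eps$, well-defined and non-negative throughout that interval, so the limit as $\eps \downarrow 0$ exists and, by Theorem~\ref{thm:efpesequence} together with Definition~\ref{def:efpe}, yields an EFPE of $\Gamma$. The main obstacle has really been dispatched upstream by Theorem~\ref{thm:efpe admits npp}: the nontrivial work is replacing the symbolic perturbation with a polynomially representable rational one while preserving the limit behavior; once that is in hand, the present theorem is essentially a packaging of Lemke's termination, the NPP property, and the standard reduction from EFPE to NE for hardness.
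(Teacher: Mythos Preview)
Your proposal is correct and follows essentially the same route as the paper: PPAD-membership via Lemke's algorithm applied to the numerical LCP $P(\eps^*)$ for the NPP supplied by Theorem~\ref{thm:efpe admits npp}, with the NPP property guaranteeing that the resulting basis yields an EFPE in the limit, and PPAD-hardness via the observation that every EFPE is a Nash equilibrium and EFPEs always exist. Your hardness argument is in fact slightly more explicit than the paper's, spelling out the embedding of a normal-form game as a depth-one extensive-form game, but the underlying idea is identical.
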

	\begin{proof}
		Let $\eps^* = 1/V^*$ be an NPP as defined in Theorem~\ref{thm:efpe admits npp}, and let $\mathcal{B}$ be a feasible base for the (numerical) problem $P(\eps^*)$, found using Lemke's algorithm. Since $\eps^*$ is an NPP, the pair of strategies $(\pi_1^*, \pi_2^*)$ corresponding to $\mathcal{B}$ retain their feasibility with respect to the LCP $P(\eps)$ as $\eps \to 0$, meaning that $(\pi_1^*, \pi_2^*)$ is in fact an EFPE.

		Furthermore, given that $V^*$ requires a number of bits polynomial in the input game size, each iteration of Lemke's algorithm takes time polynomial in the game size. This proves that the algorithm described is  a path-following algorithm requiring a polynomial-time cost at each step, and therefore the problem of finding an EFPE in two-player games is in the \textsf{PPAD} class. The hardness easily follows from the fact that 
 EFPE is a refinement of Nash equilibrium, 
 an EFPE always exists, and 
 finding a Nash is \textsf{PPAD}-complete.
 Therefore, if finding an EFPE were not \textsf{PPAD}-hard, then one could use the EFPE-finding algorithm with the aim of finding an NE and therefore not even finding an NE would be \textsf{PPAD}-hard. This concludes the proof.
	\end{proof}
We remark that the proof of theorem above also applies for an arbitrary $\eps$ (potentially non-NPP), showing that finding an NE for any $\eps< \max_{h\in\cup_i H_i} \{|\rho(h)|\}$ is in the \textsf{PPAD} class. We summarize the procedure to find an EFPE of general-sum games in Algorithm~\ref{algo:main}.

	\begin{algorithm}
		\begin{algorithmic}
			\Procedure{Find-EFPE}{}
				\State 1. Compute $\eps^*$ from $\Gamma$ as in the proof of Theorem~\ref{thm:efpe admits npp}
				\State 2. Determine a basis $\mathcal{B}$ for the numerical LCP $P(\eps^*)$
				\State 3. Let $B(\eps)$ be the base matrix corresponding to $\mathcal{B}$ in $P(\eps)$, as $\eps$ varies.
				\State \textcolor{gray}{$\triangleright$ Since $B(\eps)^{-1} b$ is a rational bounded function in a neighborhood of $0$, $B(0)^{-1} b$ exists.}
				\State 4. $(\tilde r_1, \tilde r_2, v_1^+, v_1^-, v_2^+, v_2^-)^\top = B(0)^{-1} b$
				\State {\textcolor{gray}{$\triangleright$ Note that $R_1(0) = R_2(0) = I$, so $\tilde r_1 = r_1, \tilde r_2 = r_2$}}
				\State 5. \Return the pair of strategies $(\tilde r_1, \tilde r_2)$ 
			\EndProcedure
		\end{algorithmic}
		\caption{}
		\label{algo:main}	
	\end{algorithm}

The approach we use in Theorems~\ref{thm:efpe admits npp} and~\ref{thm:general sum ppad} extends the one used in~\cite{quasiperfect}. More precisely, in~\cite{quasiperfect} the authors consider a numerical perturbation that sums to the constant terms of an LP to find a QPE of a zero-sum game, while in Theorem~\ref{thm:efpe admits npp} we consider perturbations over the coefficient of the variables of an LCP to find an EFPE of general-sum games (and, below, of zero-sum games). The two approaches can be extended to find a QPE in general-sum games by using a numerical perturbation (the description is omitted here, as it is beyond the scope of this paper). We now show that Algorithm~\ref{algo:main} requires polynomial time when the game is zero sum.
	\begin{thm}\label{thm:zero sum fp}
		The problem of determining an EFPE of a zero-sum two-player game $\Gamma$ can be solved in polynomial time in the size of the input game.
	\end{thm}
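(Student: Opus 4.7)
The plan is to combine the numerical perturbation machinery of Theorem~\ref{thm:efpe admits npp} with the classical fact that, in zero-sum games, sequence-form equilibrium computation reduces to linear programming. First, I would compute the NPP $\eps^* = 1/V^*$ exactly as in Theorem~\ref{thm:efpe admits npp}; by that theorem together with Lemma~\ref{lem:lcp poly bit}, $\eps^*$ is rational with polynomial bit-complexity, and the entries of $M(\eps^*)$ and $b$ are representable in polynomially many bits.

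Next, I would specialize the LCP $P(\eps^*)$ of Lemma~\ref{lem:EFPE LCP} to the zero-sum case $U_2 = -U_1$. With this substitution, the best-response program $\text{BR}_1(\eps^*)$ and the dualized program $\overline{\text{BR}}_2(\eps^*)$ of Lemma~\ref{lem:EFPE LCP} become a genuine primal-dual pair of LPs: concretely, the problem $P(\eps^*)$ is equivalent to the single perturbed minimax LP
\[
\max_{\tilde r_1,\,v_2}\; f_2^\top v_2\quad\text{s.t.}\quad F_1 R_1(\eps^*)^{-1}\tilde r_1 = f_1,\;\tilde r_1\ge 0,\;F_2^\top v_2 \le U_1^\top R_1(\eps^*)^{-1}\tilde r_1,
\]
whose dual optimum supplies $\tilde r_2$ and $v_1$ (with the free variable $v_i$ decomposed as $v_i^+ - v_i^-$ to match the format of $P(\eps^*)$). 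Since all coefficients have polynomial bit-length, this LP is solvable in polynomial time, e.g., by the ellipsoid or an interior-point method.

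Third, I would extract the basis $\mathcal{B}$ associated with the primal-dual optimal vertex and then, as in Algorithm~\ref{algo:main}, symbolically form the base matrix $B(\eps)$ of $P(\eps)$ and evaluate $B(0)^{-1} b$. Since $\eps^*$ is an NPP, $\mathcal{B}$ remains optimal for every $\eps\in(0,\eps^*]$, so $B(\eps)^{-1} b$ is a bounded rational function of $\eps$ on that interval whose value at $\eps=0$ yields a bona fide EFPE. All arithmetic remains polynomial in the input game size, giving the claim.

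The main obstacle to formalize is the reduction from $P(\eps^*)$ to the single LP: one must check that, in the zero-sum specialization, any optimal primal-dual solution of the LP above really satisfies all feasibility and complementarity conditions \circled{1}--\circled{5} of Lemma~\ref{lem:EFPE LCP}, i.e., that the perturbed analogue of the classical von Stengel LP reduction goes through verbatim even after the variable change $r_i = R_i(\eps)^{-1}\tilde r_i$. Once this is verified, polynomial-time LP solvability at $\eps^*$ together with the NPP property and Lemma~\ref{lem:lcp poly bit} immediately deliver the claimed polynomial-time algorithm.
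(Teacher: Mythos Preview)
Your overall strategy---compute the NPP $\eps^*$, solve a numerically perturbed problem at $\eps^*$ via linear programming, then pass to the limit through the basis---is sound and matches the paper's high-level plan. The difference lies in \emph{how} you argue that the zero-sum LCP $P(\eps^*)$ reduces to an LP. You go the classical von Stengel route, pairing $\text{BR}_1(\eps^*)$ with $\overline{\text{BR}}_2(\eps^*)$ to obtain a single minimax LP, and you correctly flag that the verification of conditions \circled{1}--\circled{5} under the change of variables $r_i = R_i(\eps)^{-1}\tilde r_i$ is the remaining work (indeed, the LP you wrote drops the $R_2(\eps^*)^{-\top}$ factor that should appear on both sides of the dual constraint for player~2, so the stated LP is not literally the right one, though the fix is mechanical).

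The paper sidesteps this entire derivation with a one-line algebraic observation: since $U_2 = -U_1^\top$ in the zero-sum case, the matrix $M(\eps)$ of Lemma~\ref{lem:EFPE LCP} is skew-symmetric, i.e.\ $M(\eps)+M(\eps)^\top=0$, and hence $z^\top M(\eps) z = 0$ for every $z$. The complementarity condition $z^\top w = z^\top(M(\eps)z + b) = 0$ therefore collapses to the \emph{linear} equation $z^\top b = 0$, so $P(\eps^*)$ is already an LP with no further rewriting needed. This buys a shorter argument with nothing left to verify; your route is more constructive (it hands you the explicit minimax LP one would actually solve) but carries the extra bookkeeping you identified.
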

	\begin{proof}
		Like in Theorem~\ref{thm:general sum ppad}, we can easily extract an EFPE by looking at the feasible matrix $\mathcal{B}$ which solves the (numerical) LCP $P(\eps^*)$. However, in the zero-sum setting, we do not need to use Lemke's algorithm. Indeed, notice that in zero-sum two-player games, matrix $M(\eps)$ as defined in Lemma~\ref{lem:EFPE LCP} is such that $M(\eps) + M(\eps)^\top = 0$ for all $\eps$, because $U_2 = -U_1^\top$. Therefore, the complementarity condition can be rewritten as
		\[
			\begin{aligned}
				z^\top(M(\eps)z + b(\eps)) &= z^\top M(\eps)z + z^\top b(\eps)\\
				                           &= \frac{1}{2}\,z^\top\left(M(\eps) + M(\eps)^\top\right) z + z^\top b(\eps)\\
				                           &= z^\top b(\eps) = 0,
			\end{aligned}
		\]
		a linear condition instead of a quadratic one. This shows that when the game is zero-sum, the LCP is actually an LP. As such, a basis for the LCP of Algorithm~\ref{algo:main} can be computed in polynomial time, leading to an overall polynomial time algorithm. This completes the proof.
	\end{proof}

\section{Conclusion and Future Work}
	In this paper, we provide a path-following algorithm to find an EFPE in 2-player games. Our algorithm requires the application of Lemke's algorithm to a numerically perturbed LCP. We show that the computation cost of each iteration of the algorithm is polynomial, and this shows that finding an EFPE in 2-player games is \textsf{PPAD}-complete. We also show that in the notable case of 2-player zero-sum games, linear programming can be used and that the problem is in the \textsf{FP} class. In order to achieve our result, we also develop two accessory results. The first one shows that the problem of finding a Nash equilibrium when a player does not perfectly control her moves being subject to mistakes, as it happens in practice for physical agents, is \textsf{PPAD}-complete and can be done by means of our algorithm. The second one is an extension of the characterization of numerically perturbed LCPs in which even the coefficients of the variables are perturbed.

	In future works, we aim to extend our accessory results as well as to study the verification problem for an EFPE in 2-player games (that is, the problem of deciding whether a strategy profile given in input is an EFPE).

\bibliography{citations}
\bibliographystyle{aaai}
\clearpage

\appendix
\section{Appendix}
\vspace{1cm}
\section{Discussion of (Gatti and Iuliano 2011)}

We provide a brief discussion about the previous results about the computation of an EFPE in 2-player games.
\begin{rem}\label{oldpaper}
In~\cite{DBLP:conf/aaai/GattiI11}, the authors provide two versions of Lemke's algorithm applied to the sequence form, claiming that they compute an EFPE and that the computational cost of each iteration is polynomial. We initially observe that the authors do not provide any proof of the soundness of the their algorithms and  of the polynomial cost of each single step of the algorithm.

For the sake of presentation, we initially provide a discussion about the second version of the algorithm (described in the section titled ``Finding an EFPE in Non-Uniform $\varepsilon$-Perturbed Games''). Here the authors propose the adoption of a perturbation in the dual best-response constraints defined as follows: for simplicity we report only the perturbation for player~1, it is $-\eps^{\frac{|q_{\max}|+1}{|q|}}$ where $q_{\max}$ is the longest sequence of player~$1$. Basically, every time the same utility is reached at different terminal nodes, they prefer the terminal node reached with the smallest sequence. We can show that such a perturbation may not lead to any EFPE. Consider the game in Figure~\ref{fig:counterexample}. There is only one player, say player~1. Any EFPE of this game prescribes player~1 to play~$\mathsf{R}_1$. Indeed, at information set~$1.1$, the expected utility of player~1 from playing~$\mathsf{R}_1$ is 1, while the expected utility from playing~$\mathsf{L}_1$ is strictly smaller than 1 (the exact value depends on the perturbation used). By using the perturbation above, the algorithm returns $\mathsf{L}_1\mathsf{L}_2$, since the value of such sequence is $1-\eps^2$, while the value of any terminal sequence of the form `$\mathsf{R}_1* *$' is $1-\eps$. Since $\eps$ goes to zero, $1-\eps^2>1-\eps$.
	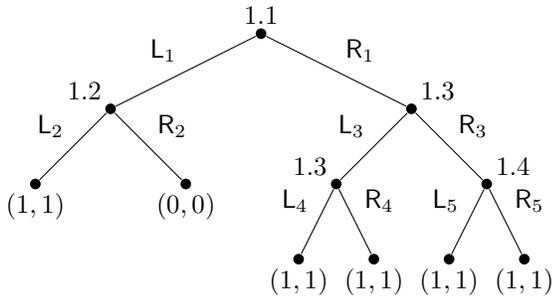
\begin{figure}[H]
		\centering\begin{tikzpicture}[auto]
			\node[circle, inner sep=.5mm, fill=black] (root) at (0, 0) {};
			\node[circle, inner sep=.5mm, fill=black] (l1) at (-2, -1) {};
			\node[circle, inner sep=.5mm, fill=black] (r1) at (2, -1) {};
			\node[circle, inner sep=.5mm, fill=black] (l1l2) at (-3, -2) {};
			\node[circle, inner sep=.5mm, fill=black] (l1r2) at (-1, -2) {};
			\node[circle, inner sep=.5mm, fill=black] (r1l3) at (1, -2) {};
			\node[circle, inner sep=.5mm, fill=black] (r1r3) at (3, -2) {};
			\node[circle, inner sep=.5mm, fill=black] (r1l3l4) at (0.5, -3) {};
			\node[circle, inner sep=.5mm, fill=black] (r1l3r4) at (1.5, -3) {};
			\node[circle, inner sep=.5mm, fill=black] (r1r3l5) at (3.5, -3) {};
			\node[circle, inner sep=.5mm, fill=black] (r1r3r5) at (2.5, -3) {};
			\draw (root) node[above]{$1.1$} --node[above left] {$\mathsf{L}_1$} (l1) node[above left] {$1.2$};
			\draw (root) --node[above right] {$\mathsf{R}_1$} (r1) node[above right] {$1.3$};
			\draw (l1) --node[above left] {$\mathsf{L}_2$} (l1l2) node[below] {$(1,1)$};
			\draw (l1) --node[above right] {$\mathsf{R}_2$} (l1r2) node[below] {$(0,0)$};
			\draw (r1) --node[above left] {$\mathsf{L}_3$} (r1l3) node[above left] {$1.3$};
			\draw (r1) --node[above right] {$\mathsf{R}_3$} (r1r3) node[above right] {$1.4$};
			\draw (r1l3) --node[above left] {$\mathsf{L}_4$} (r1l3l4) node[below] {$(1,1)$};
			\draw (r1l3) --node[above right] {$\mathsf{R}_4$} (r1l3r4) node[below] {$(1,1)$};
			\draw (r1r3) --node[above right] {$\mathsf{R}_5$} (r1r3l5) node[below] {$(1,1)$};
			\draw (r1r3) --node[above left] {$\mathsf{L}_5$} (r1r3r5) node[below] {$(1,1)$};
		\end{tikzpicture}
		\caption{A game used as counterexample in Remark~\ref{oldpaper}.}
		\label{fig:counterexample}
	\end{figure}

The analysis of the first version of the algorithm (described in the section titled ``Finding an EFPE in Uniform $\varepsilon$-Perturbed Games'') is more involved and we provide just a sketch. First, the authors propose a double perturbation---an additive one as proposed by~\cite{quasiperfect} and a new one that is multiplicative---and they claim that adopting these perturbations is equivalent to consider a perturbed game $(\Gamma,l)$ where $l(a)=\varepsilon$ for every~$a$. However, this is not true as shown in our paper where we show the perturbation over the sequences leading to such a $(\Gamma, l)$. The perturbed LCP we provide in our paper and that one provided in~\cite{DBLP:conf/aaai/GattiI11} are different, e.g., in our LCP even the sequence-form constrains $F_i r_i = f_i$ are subject to a multiplicative perturbation, while in the LCP provided in~\cite{DBLP:conf/aaai/GattiI11} those constraints do not present any multiplicative perturbation. Nevertheless, we tried to look for a simple counterexample showing that the perturbations proposed in~\cite{DBLP:conf/aaai/GattiI11} fail in finding an EFPE, but we did not find it. Second, in the algorithm proposed by the authors, each coefficient of matrix $M$ of the LCP is subject to a symbolic perturbation expressed as a polynomial in~$\varepsilon$ whose maximum degree increases at each iteration. The crucial issue is that the increase is exponential, and therefore the maximum degree of the polynomial increases exponentially, requiring to store an exponential amount of numbers. The authors use the integer pivoting in their algorithm. When integer pivoting is used, e.g., in the simplex algorithm, the values of the numbers stored in the tableau rise exponentially, but they can be stored with a linear number of bits by using binary representation. Conversely, in our case, since the maximum degree of the polynomial rises exponentially, we need to store an exponentially large number of coefficients. We cannot exclude the case in which some coefficients can be discarded keeping only a polynomial number of coefficients, but no proof is provided in~\cite{DBLP:conf/aaai/GattiI11} and we did not find any  simple way to prove that.
\end{rem}

\section{Lemke's algorithm conditions}

	Lemke's algorithm~\cite{lemke1970recent} is an iterative algorithm able to solve a linear complementarity problem, provided it satisfies the following conditions:

	\begin{lem}[Theorem 4.1, Koller, Megiddo and von Stengel 1996]\label{lem:LCP conditions}
		If:\\[1mm]
		\begin{tabular}{ll}
			(a) & $z^\top M z \ge 0$ for all $z \ge 0$, and\\
			(b) & $z\ge 0, Mz \ge 0, z^\top M z = 0 \implies z^\top b \ge 0$,
		\end{tabular}\\[1mm]
		then Lemke's algorithm computes a solution of the LCP and does not terminate with a secondary ray.
	\end{lem}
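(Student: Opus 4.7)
The plan is to argue by contradiction. Lemke's algorithm, applied to the augmented system $w = Mz + b + z_0 d$ (with artificial variable $z_0$ and covering vector $d > 0$), follows a path of almost-complementary basic feasible solutions starting from a distinguished initial vertex. With standard lexicographic anti-cycling, it terminates either by driving $z_0 = 0$ (yielding a solution of the LCP) or by encountering a \emph{secondary ray}, i.e.\ an unbounded edge along which $z_0$ never leaves the basis. I would assume the latter and derive a contradiction from hypotheses (a) and (b).

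First I would set up notation for the ray: parameterize it as $(z(t),w(t),z_0(t)) = (\bar z,\bar w,\bar z_0)+t(\hat z,\hat w,\hat z_0)$ for $t \ge 0$. By construction the direction vector is nonzero with non-negative entries, and $\bar z_0 > 0$—otherwise the base point would itself be a solution, not a transient vertex of the path. Linearity of the feasibility constraint yields $\hat w = M\hat z + \hat z_0 d$, and almost-complementarity along the entire ray forces the componentwise products $\hat z\circ\hat w$, $\hat z\circ\bar w$, $\bar z\circ\hat w$ all to vanish (else some coordinatewise product would grow linearly in $t$, violating the complementarity Lemke's path maintains).

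Next I would squeeze $\hat z^\top b$ between two incompatible bounds. Left-multiplying $\hat w = M\hat z + \hat z_0 d$ by $\hat z^\top$ gives $0 = \hat z^\top\hat w = \hat z^\top M\hat z + \hat z_0\,\hat z^\top d$; both summands are non-negative—the first by hypothesis (a), the second because $d > 0$ and $\hat z, \hat z_0 \ge 0$—so both vanish. A short case analysis rules out $\hat z = 0$ (it would force $\hat z_0 = 0 = \hat w$, contradicting nonzeroness of the direction), so $\hat z \ne 0$, $\hat z_0 = 0$, and $M\hat z = \hat w \ge 0$. Hypothesis (b) applied to $\hat z$ therefore yields $\hat z^\top b \ge 0$. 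On the other hand, substituting the base-point feasibility $\bar w = M\bar z + b + \bar z_0 d$ into $\hat z^\top\bar w = 0$ gives
\[
\hat z^\top b \;=\; -\hat z^\top M\bar z \;-\; \bar z_0\,\hat z^\top d.
\]
The second term is strictly negative, since $\bar z_0 > 0$, $d > 0$, and $\hat z$ is a nonzero non-negative vector. For the first term I would apply copositivity to $\lambda\bar z + \hat z$ for $\lambda > 0$: expand $(\lambda\bar z+\hat z)^\top M(\lambda\bar z+\hat z) \ge 0$, substitute $\hat z^\top M\hat z = 0$ and $\bar z^\top M\hat z = 0$ (the latter from $\bar z^\top\hat w = 0$ together with $\hat z_0 = 0$), divide by $\lambda$, and send $\lambda \downarrow 0$ to conclude $\hat z^\top M\bar z \ge 0$. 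Combining, $\hat z^\top b < 0$, contradicting the lower bound of the previous step.

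The hard part will be the structural bookkeeping for Lemke's almost-complementarity: justifying that a secondary ray genuinely produces the three componentwise annihilations (not merely their aggregate inner-product forms), and pinning down $\bar z_0 > 0$ at the base of the ray by distinguishing the ray-encounter outcome from algorithmic termination with $z_0 = 0$. Once these combinatorial facts are cleanly in place, the copositivity-plus-(b) inequality chase is routine and delivers the contradiction; termination with a solution then follows because, with anti-cycling, rays are the only non-solution exit mode.
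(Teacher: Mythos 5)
The paper does not actually prove this lemma: it is imported verbatim as Theorem~4.1 of Koller, Megiddo and von Stengel (1996), so there is no in-paper proof to compare against. What you have reconstructed is the classical secondary-ray argument (going back to Lemke, and presented in Cottle--Pang--Stone and in the cited paper itself), and your skeleton is the right one: parameterize the ray, use non-negativity to split the aggregate complementarity $z(t)^\top w(t)=0$ into $\hat z^\top\hat w=\hat z^\top\bar w=\bar z^\top\hat w=0$, deduce $\hat z_0=0$, $M\hat z=\hat w\ge 0$ and $\hat z^\top M\hat z=0$, apply (b) to get $\hat z^\top b\ge 0$, and contradict this via the base-point identity together with the $\lambda\downarrow 0$ copositivity trick that yields $\hat z^\top M\bar z\ge 0$.

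One sub-step is justified incorrectly. You exclude $\hat z=0$ by claiming it ``would force $\hat z_0=0=\hat w$''; it would not --- with $\hat z=0$ the identity $0=\hat z^\top M\hat z+\hat z_0\,\hat z^\top d$ degenerates to $0=0$ and places no constraint on $\hat z_0$. The correct exclusion is structural: if $\hat z=0$ then $\hat w=\hat z_0 d$ with $\hat z_0>0$ (else the whole direction vanishes), hence $\hat w>0$, and $\bar z\circ\hat w=0$ forces $\bar z=0$, so the ray lies on the primary ray $\{(0,\,b+z_0 d,\,z_0)\}$ --- contradicting the definition of a secondary ray, since Lemke's path never returns to the primary ray after leaving the initial vertex. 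Similarly, $\bar z_0>0$ is not so much delicate bookkeeping as the observation that if $z_0$ had value zero at the ray's base vertex, that vertex would already be an LCP solution and the algorithm would have stopped there (lexicographic pivoting disposes of the degenerate case). With those two repairs your argument is complete and coincides with the standard proof underlying the cited theorem.
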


\section{Omitted proofs}

	\inverseR*
	\begin{proof}
		By induction on $n$. The lemma trivially holds for $n=1$. Now, suppose the theorem holds for $n = \bar n$; we will show that it holds for the $(\bar n + 1)\times(\bar n + 1)$ behavioral matrix $R(\eps)$. Indeed, we have
		\[
		  R(\eps) = \left(\begin{array}{c|l}
            R'(\eps) & 0\\
		    \hline\\[-3mm]
		    b(\eps)^\top & 1
		  \end{array}\right),\,	b(\eps)^\top = (0,\dots, 0, {-\eps}, 0, \dots, 0).
		\]
		where $R'(\eps)$ is a $\bar n \times \bar n$ behavioral perturbation matrix. Hence, the matrix
		\[
		R(\eps)^{-1} = \left(\begin{array}{c|l}
		            R'(\eps)^{-1} & 0\\
				    \hline\\[-3mm]
				    -b(\eps)^\top R'(\eps)^{-1}  & 1
				  \end{array}\right)
		\]
		is indeed the inverse matrix of $R(\eps)$. Using the inductive hypothesis, we have $R'(\eps)^{-1} = I' + \eps E'(\eps)$, and therefore
		\[
		  R(\eps)^{-1} = I + \eps\left(\begin{array}{c|l}
		  		            E'(\eps) & 0\\
		  				    \hline\\[-3mm]
		  				    (-b(\eps)/\eps)^\top R'(\eps)^{-1}  & 0
		  				  \end{array}\right).
		\]
		Finally, note that $(-b(\eps)/\eps)^\top = (0,\dots,0,1,0\dots,0)$ is a non-negative real vector. Therefore, $E''(\eps) = (-b(\eps)/\eps)^\top R'(\eps)^{-1}$ is a row vector whose entries are polynomials with non-negative integer coefficients, so that
		\[
		  R(\eps)^{-1} = I + \eps\left(\begin{array}{c|l}
		  		  		            E'(\eps) & 0\\
		  		  				    \hline\\[-3mm]
		  		  				    E''(\eps)  & 0
		  		  				  \end{array}\right) = I + \eps E(\eps),
		\]
		where $E(\eps)$ is a lower triangular matrix whose entries are polynomials in $\eps$ having non-negative integer coefficients, as we wanted to prove.
	\end{proof}

	\lcppolyspace*
	\begin{proof}
		Consider the LCP formulation of Lemma~\ref{lem:EFPE LCP}. We begin by showing that each coefficient appearing in $P(\eps)$ requires a polynomial amount of memory to be store. This property trivially holds for vector $b$. On the other hand, all numbers appearing in matrix $M(\eps)$ are either zeros, or they are obtained by multiplying two or more of the following matrices together: $R_1^{-\top}, R_2^{-\top}, U_1, U_2^\top, F_1^\top, F_2^\top$. Hence, as long as each of the coefficients appearing in the above-mentioned matrices requires a polynomial number of bits in the input game size, the property is true. This is clearly true for $U_1, U_2^\top, F_1, F_2^\top$, so we are left with the task of proving this property for $R_1(\eps)^{-1}$ and $R_2(\eps)^{-1}$. However, since $\det R_1(\eps) = 1$ (indeed, notice that $R_1(\eps)$ is lower triangular), using the adjoint matrix theorem and the Leibniz formula for the determinant, we conclude that each entry in $R_1(\eps)^{-1}$ is obtained as a sum of $n!$ terms, each of which is a product of $n$ entries of $R_1(\eps)$, where $n$ is the size of $R_1(\eps)$ (the same holds for $R_2(\eps)$). Therefore, the property holds, showing that each coefficient in $M(\eps)$ and $b$ requires a polynomial amount of memory to be stored.

		We now show that the maximum degree appearing in $M(\eps)$ is $2n$. This is a consequence of the observation above: since each entry in $R_1(\eps)^{-1}$ is obtained as sum of $n!$ terms, each of which is a product of $n$ entries of $R_1(\eps)$, the maximum degree appearing in $R_1(\eps)^{-1}$ is $n$, where $n$ is the size of $R_1(\eps)$ (the same holds for $R_2(\eps)$). Now, since each element of $M(\eps)$ is obtained from the product of at most two matrices dependent on $\eps$, the maximum degree appearing in $M(\eps)$ (and therefore in $P(\eps)$) is $2n$.

		Thus, we have a polynomial amount of coefficients to store, each of which requires a polynomial amount of memory. The required space is therefore polynomial.
	\end{proof}

	\feasibilityFRxf*
	\begin{proof}
	 	We will prove that there always exists a realization-plan strategy profile $y$ such that $R_i(\eps) y \ge 0$ when $0 \le \eps \le 1/\nu$. This statement is equivalent to that of the lemma.

	 	We let such realization-plan strategy profile $y$ be defined as follows:
	 	\[
	 	  y(q_\varnothing) = 1,\quad y(qa) = \frac{y(q)}{|\rho(h)|},
	 	\]
	 	where $h$ is the information set to which $q$ leads. It is immediate to see that such $y$ is indeed a realization-plan strategy profile, that is $F_iy = f_i$. Indeed,
	 	\[
	 	  \sum_{a\in\rho(h)} y(qa) = \sum_{a\in\rho(h)} \frac{y(q)}{|\rho(h)|} = y(q).
	 	\]

	 	We now prove that $y$ is such that $R_i(\eps) y \ge 0$ for all $0 < \eps \le 1/\nu$. Indeed, notice that because of the peculiar structure of $R_i(\eps)$, the condition $R(\eps)y \ge 0$ is actually equivalent to
	 	\[
	 	  y(qa) \ge \eps y(q),\quad \forall q.
	 	\]
	 	Since $|\rho(h)| \le \nu$ for all $q$ and $0 < \eps \le 1/\nu$ by hypothesis, we have $1/|\rho(h)| \ge \eps$ for all $q$, and the inequality above holds. This completes the proof.
	\end{proof}

	\polynomialpositive*
		\begin{proof}
			We prove that when $a_0 > 0$, $p(\eps)$ is positive for all $0 \le \eps \le \eps^*$.
			Indeed,
			\[
				\begin{array}{rl}
					p(\eps) &= \displaystyle a_0 + a_{1}\eps^{1} + \dots + a_n\eps^n \\
					        &> \displaystyle a_0 - \mu \eps \sum_{i = 0}^{\infty} \eps^{i}\\
					        &= \displaystyle a_0 - \frac{\mu\eps}{1-\eps}.
				\end{array}
			\]
			Since $\eps \le \eps^* = a_0/(\mu + a_0)$ we have
			\[
				p(\eps) > a_0 - \frac{\mu a_0}{\mu+a_0-a_0} = 0.
			\]

			To conclude the proof, we need to show that when $a_0 < 0$, $p(\eps)$ is negative for all $0 \le \eps \le \eps^*$. The proof of this part is completely symmetric to that of the previous part.
		\end{proof}

	\integerrationalsign*
		\begin{proof}
			If the numerator of $p(\eps)$ is identically zero, the thesis follows trivially, as $p(\eps) = 0$ for all $\eps$, while the denominator is never zero for all $0<\eps\le\eps^*$ due to Lemma~\ref{lem:polynomial positive}. If, on the other hand, the numerator of $p(\eps)$ is not identically zero, there exist $q$ and $r$, both non-negative, such that $p$ can be written as
			\[
			p(\eps) = \frac{\eps^q(a_{q} + a_{q+1}\eps + \dots + a_{n}\eps^{n-q})}{\eps^r(b_r + b_{r+1}\eps^1 + \dots + b_{n}\eps^{n-r})},
			\]
			with $a_q, b_r \neq 0$. Since $a_q$ and $b_r$ are integer, $|a_q|, |b_r| \ge 1$ and we have
			\[
			\eps^* = \frac{1}{2\mu} \le \min\left\{\frac{|a_q|}{\mu_a + |a_q|}, \frac{|b_r|}{\mu_b + |b_r|}\right\}.
			\]
			Using Lemma~\ref{lem:rational positive} we conclude that the sign of $p(\eps)$ is constant, and equal to that of $a_q/b_r$, for all $0 < \eps \le \eps^*$.
		\end{proof}

	\lcptermination*
	\begin{proof}
		We follow the same proof structure as that in \cite[Section 4]{koller1996efficient}. In particular, we prove that if $U_1, U_2 < 0$, then conditions (a) and (b) of Lemma~\ref{lem:LCP conditions} hold for all problems $P(\eps)$ defined in Lemma~\ref{lem:EFPE LCP}. Notice that we can always assume $U_1, U_2 < 0$ without loss of generality, as we can apply an offset to the payoff matrices leaving the game unaltered.\\[1mm]

		\textbf{Condition (a)}. We need to show that when $U_1, U_2 < 0$, then $z^\top M(\eps) z \ge 0$ for all $z\ge 0$. We have:
		\[
			z^\top M(\eps) z = {\tilde r_1}^\top R_1(\eps)^{-\top} (-U_1 - U_2) R_2(\eps)^{-1} {\tilde r_2}.
		\]
		Substituting $U = -U_1 - U_2 > 0$ and using Lemma~\ref{lem:inverse R}:
		\begin{equation}\label{eq:condition A}
			\begin{array}{rl}
				z^\top M(\eps) z &= {\tilde r_1}^\top (I + \eps E_1(\eps)^\top) U (I + \eps E_2(\eps)){\tilde r_2}\\
				                 &\ge {\tilde r_1}^\top U {\tilde r_2}.
			\end{array}
		\end{equation}
		When $z \ge 0$, then $\tilde r_1, \tilde r_2 \ge 0$ and we conclude that $\tilde r_1^\top U \tilde r_2 \ge 0$, which implies the thesis.\\[1mm]

		\textbf{Condition (b)}. We already proved (Equation~\ref{eq:condition A}) that
		\[
			z^\top M(\eps) z \ge \tilde r_1^\top U \tilde r_2,
		\]
		where $U > 0$. In order for $z^\top M(\eps) z$ to be zero given $z \ge 0$, it is  necessary that $\tilde r_1, \tilde r_2 = 0$. Defining $v_1 = v_1^+ - v_1^-$ and $v_2 = v_2^+ - v_2^-$, we have
		\[
			M(\eps)z = \begin{pmatrix}
				R_1(\eps)^{-\top} F_1^{\top} v_1\\
				R_2(\eps)^{-\top} F_2^{\top} v_2\\
				0\\ 0 \\ 0 \\ 0
			\end{pmatrix},\ \
			z^\top b = b^\top z = f_1^\top v_1 + f_2^\top v_2.
		\]
		Hence, in order to complete the proof, it suffices to show that
		\[
			R_i(\eps)^{-\top} F_i^\top v_i \ge 0 \implies f_i^\top v_i \ge 0\qquad(i\in\{1,2\}).
		\]
		To this end, we consider the following linear optimization problem $Y_i(\eps)$, and its dual $\bar Y_i(\eps)$:
		\[
			\begin{array}{rcl}
			Y_i(\eps) &:&\left\lbrace\begin{array}{ll}
				\max_{\tilde r_i}        & 0\\
				\text{\normalfont{s.t.}} & F_i R_i(\eps)^{-1}\tilde r_i = f_i\\
				                         & \tilde r_i \ge 0
			\end{array}\right.,\\[7mm]
			\bar Y_i(\eps)&:&\left\lbrace\begin{array}{ll}
							\min_{v_i}        & f_i^\top v_i\\
							\text{\normalfont{s.t.}} & R_i(\eps)^{-\top}F_i^\top v_i \ge 0\\
						\end{array}\right..
			\end{array}
		\]
		Notice that $Y_i(\eps)$ is feasible since $\eps \le 1/\nu$ by hypothesis (Lemma~\ref{lem:feasibility of FRxf}). Indeed, with such an $\eps$ the induced perturbed game $(\Gamma,l)$ is such that $\sum_{a \in \rho(h)}l(a)<1$ for every~$h$. By the strong duality theorem, we conclude that whenever the constraint of the dual problem is satisfied, the objective value is non-negative, that is $R_i(\eps)^{-\top} F_i^\top v_i \ge 0 \implies f_i^\top v_i \ge 0$ as we aimed to show.
	\end{proof}
\end{document}